\documentclass[letter]{article} 
\usepackage[
  margin=3cm,
  includefoot,
  footskip=30pt,
]{geometry}
\usepackage{url}  
\usepackage{graphicx}  



\setcounter{secnumdepth}{2} 
\usepackage{booktabs}       
\usepackage{amsfonts}       
\usepackage{nicefrac}       
\usepackage{microtype}      
\usepackage{amsthm}
\usepackage{algorithm}
\usepackage{bbold}
\usepackage{amsmath}
\usepackage{algorithmic}
\usepackage{verbatim}
\usepackage{subcaption}
\usepackage{cite}
\usepackage{graphicx}
\usepackage{multirow}
\usepackage{amssymb}
\usepackage{color}
\usepackage{caption}
\usepackage{enumitem}
\usepackage{hyphenat}
\usepackage{array}
\usepackage{booktabs}
\usepackage{wrapfig}
\usepackage{relsize}
\usepackage{tikz}

\usepackage{afterpage}
\usetikzlibrary{arrows}
\title{The Geometric Block Model}

%

\author{
Sainyam Galhotra \quad Arya Mazumdar \quad Soumyabrata Pal \quad Barna Saha \\
  College of Information and Computer Science\\
  University of Massachusetts Amherst\\
  Amherst, MA 01003\\
  \texttt{\{sainyam,arya,spal,barna\}@cs.umass.edu} 
  \thanks{This research is supported by NSF  awards  CCF 1642550, CCF 1464310 and CAREER awards  1642658 and 1652303. A shorter version of this work has appeared in the proceedings of the 32nd AAAI Conference on Artificial Intelligence. The AAAI proceeding version of the paper contained some incorrect results that are corrected in this version.}
}

\DeclareGraphicsExtensions{.eps}

\newcommand\nc\newcommand
\nc\bfa{{\boldsymbol a}}\nc\bfA{{\boldsymbol A}}\nc\cA{{\mathcal A}}
\nc\bfb{{\boldsymbol b}}\nc\bfB{{\boldsymbol B}}\nc\cB{{\mathcal B}}
\nc\bfc{{\boldsymbol c}}\nc\bfC{{\boldsymbol C}}\nc\cC{{\mathcal C}}
\nc\sC{{\mathscr C}}
\nc\bfd{{\boldsymbol d}}\nc\bfD{{\boldsymbol D}}\nc\cD{{\mathcal D}}
\nc\bfe{{\boldsymbol e}}\nc\bfE{{\boldsymbol E}}\nc\cE{{\mathcal E}}
\nc\bff{{\boldsymbol f}}\nc\bfF{{\boldsymbol F}}\nc\cF{{\mathcal F}}
\nc\bfg{{\boldsymbol g}}\nc\bfG{{\boldsymbol G}}\nc\cG{{\mathcal G}}
\nc\bfh{{\boldsymbol h}}\nc\bfH{{\boldsymbol H}}\nc\cH{{\mathcal H}}
\nc\bfi{{\boldsymbol i}}\nc\bfI{{\boldsymbol I}}\nc\cI{{\mathcal I}}
\nc\bfj{{\boldsymbol j}}\nc\bfJ{{\boldsymbol J}}\nc\cJ{{\mathcal J}}
\nc\bfk{{\boldsymbol k}}\nc\bfK{{\boldsymbol K}}\nc\cK{{\mathcal K}}
\nc\bfl{{\boldsymbol l}}\nc\bfL{{\boldsymbol L}}\nc\cL{{\mathcal L}}
\nc\bfm{{\boldsymbol m}}\nc\bfM{{\boldsymbol M}}\nc\sM{{\mathscr M}}\nc\cM{{\mathcal M}}
\nc\bfn{{\boldsymbol n}}\nc\bfN{{\boldsymbol N}}\nc\cN{{\mathcal N}}
\nc\bfo{{\boldsymbol o}}\nc\bfO{{\boldsymbol O}}\nc\cO{{\mathcal O}}
\nc\bfp{{\boldsymbol p}}\nc\bfP{{\boldsymbol P}}\nc\cP{{\mathcal P}}
\nc\bfq{{\boldsymbol q}}\nc\bfQ{{\boldsymbol Q}}\nc\cQ{{\mathcal Q}}
\nc\bfr{{\boldsymbol r}}\nc\bfR{{\boldsymbol R}}\nc\cR{{\mathcal R}}
\nc\bfs{{\boldsymbol s}}\nc\bfS{{\boldsymbol S}}\nc\cS{{\mathcal S}}
\nc\bft{{\boldsymbol t}}\nc\bfT{{\boldsymbol T}}\nc\cT{{\mathcal T}}
\nc\bfu{{\boldsymbol u}}\nc\bfU{{\boldsymbol U}}\nc\cU{{\mathcal U}}
\nc\bfv{{\boldsymbol v}}\nc\bfV{{\boldsymbol V}}\nc\cV{{\mathcal V}}
\nc\bfw{{\boldsymbol w}}\nc\bfW{{\boldsymbol W}}\nc\cW{{\mathcal W}}
\nc\bfx{{\boldsymbol x}}\nc\bfX{{\boldsymbol X}}\nc\cX{{\mathcal X}}
\nc\bfy{{\boldsymbol y}}\nc\bfY{{\boldsymbol Y}}\nc\cY{{\mathcal Y}}
\nc\bfz{{\boldsymbol z}}\nc\bfZ{{\boldsymbol Z}}\nc\cZ{{\mathcal Z}}

\nc\diff{{\mathrm d}}
\nc\e{{\mathrm e}}
\nc\calC{{\mathcal C}}

\newcommand{\remove}[1]{}
\newcommand{\correct}[1]{\textcolor{black}{#1}}
\newcommand{\latest}[1]{\textcolor{red}{#1}}

\newcommand{\avg}{{\mathbb E}}
\newcommand{\expect}{{\mathbb E}}
\newcommand{\dist}{d_{L}}

\newtheorem*{lemma*}{Lemma}

\newtheorem{theorem}{Theorem}
\newtheorem{lemma}{Lemma}

\theoremstyle{definition}
\newtheorem*{definition}{Definition}

\newtheorem{remark}{Remark}

\setlength{\belowdisplayskip}{0pt} \setlength{\belowdisplayshortskip}{0pt}
\setlength{\abovedisplayskip}{0pt} \setlength{\abovedisplayshortskip}{0pt}

\def\DEBUG{true}

\ifdefined\DEBUG

  \def\rem#1{{\marginpar{\raggedright\scriptsize #1}}}
  \newcommand{\barnr}[1]{\rem{\textcolor{red}{$\bullet$ #1}}}
  \newcommand{\aryar}[1]{\rem{\textcolor{green}{$\bullet$ #1}}}
\else

  \newcommand{\barnr}[1]{}
  \newcommand{\aryar}[1]{}

\fi

\newcommand\reals{{\mathbb R}}

\allowdisplaybreaks

\begin{document}

\maketitle

\begin{abstract}
To capture the inherent geometric features of many community detection problems, we propose to use a new random graph model of communities that we call a {\em Geometric Block Model}. The geometric block model generalizes the random geometric graphs in the same way that the well-studied stochastic block model generalizes the Erd\"{o}s-Renyi random graphs. It is also a natural extension of random community models inspired by the recent theoretical and practical advancement in community detection. While being a topic of fundamental theoretical interest, our main contribution is to show that many practical community structures are better explained by the geometric block model. We also show that a simple triangle-counting algorithm to detect communities in the geometric block model is near-optimal. Indeed, even in the regime where the average degree of the graph grows only logarithmically with the number of vertices (sparse-graph), we show that this algorithm performs extremely well, both theoretically and practically. In contrast, the triangle-counting algorithm is far from being optimum for the stochastic block model. We simulate our results on both real and synthetic datasets to show superior performance of both the new model as well as our algorithm. 
\end{abstract}
\section{Introduction}
The {\em planted-partition} model or the {\em stochastic block model} (SBM) is a random graph model for community detection that generalizes the well-known 
Erd\"{o}s-Renyi graphs \cite{holland1983stochastic,dyer1989solution,decelle2011asymptotic,abbe2015community,abh:16,DBLP:conf/colt/HajekWX15,chin2015stochastic,mossel2015consistency}. Consider a graph $G(V,E)$, where $V = V_1 \sqcup V_2 \sqcup \dots \sqcup V_k$ is a disjoint union of $k$ clusters denoted by $V_1, \dots, V_k.$ The edges of the graph are drawn randomly: there is an edge between $u \in V_i$ and $v \in V_j$ with probability $q_{i,j}, 1\le i,j \le k.$
Given the adjacency matrix of such a graph, the task is to find exactly (or approximately) the partition $V_1 \sqcup V_2 \sqcup \dots \sqcup V_k$ of $V$.

This model has been incredibly popular both in theoretical and practical domains of community detection, and the aforementioned references are just a small sample. Recent theoretical works  focus on characterizing sharp threshold of recovering the partition in the SBM. For example, when there are only two communities of exactly equal sizes, and the inter-cluster edge probability is $\frac{b\log n}{n}$ and intra-cluster edge probability is $\frac{a\log n}{n}$, it is known that perfect recovery is possible if and only if $\sqrt{a} - \sqrt{b} > \sqrt{2}$ 
\cite{abh:16,mossel2015consistency}. The regime of the probabilities being $\Theta\Big(\frac{\log n}{n}\Big)$ has been put forward as one of most interesting ones, because  in an Erd\"{o}s-Renyi random graph, this is the threshold for graph connectivity \cite{bollobas1998random}. This result has been subsequently generalized for $k$ communities \cite{abbe2015community,abbe2015recovering,hajek2016achieving} (for constant $k$ or when $k=o(\log{n})$), and under the assumption that the communities are generated according to a probabilistic generative model (there is a prior probability $p_i$ of an element being in the $i$th community) \cite{abbe2015community}. Note that, the results are not only of theoretical interest, many real-world networks exhibit a ``sparsely connected'' community feature \cite{leskovec2008statistical}, and any efficient recovery algorithm for SBM has many potential applications.  

One aspect that the SBM does not account for is a ``transitivity rule'' (`friends having common friends') inherent to many social and other community structures. To be precise, consider any three vertices $x, y$ and $z$. If $x$ and $y$ are connected by an edge (or they are in the same community), and $y$ and $z$ are connected by an edge (or they are in the same community), then it is more likely than not that $x$ and $z$ are connected by an edge. This phenomenon can be seen in many network structures - predominantly in social networks, blog-networks and advertising. SBM, primarily a generalization of Erd\"{o}s-Renyi random graph, does not take into account this characteristic, and in particular, probability of an edge between $x$ and $z$ there is independence of the fact that there exist edges between $x$ and $y$ and $y$ and $z$. However, one needs to be careful such that by allowing such ``transitivity'', the simplicity and elegance of the SBM is not lost.

Inspired by the above question, we propose a random graph community detection model analogous to the stochastic block model, that we call the {\em geometric block model} (GBM). The GBM depends on the basic definition of the {\em random geometric graph} that has found a lot of practical use in wireless networking because of its inclusion of the notion of proximity between nodes \cite{penrose2003random}.

\noindent{\bf Definition.}
A random geometric graph (RGG) on $n$ vertices has parameters $n$, an integer $t> 1$ and  a  real number $\beta \in [-1,1]$. It is defined by assigning a vector $Z_i \in \reals^t$ to vertex $i, 1 \le i, n,$ where $Z_i, 1\le i \le n$ are independent and identical random vectors uniformly distributed in the Euclidean sphere $\cS^{t-1} \equiv \{x\in \reals^t: \|x\|_{\ell_2} =1\}$. There will be an edge between vertices $i$ and $j$ if and only if $\langle Z_i,  Z_j\rangle \ge \beta$. 

Note that, the definition can be further generalized by considering $Z_i$s to have a sample space other than $\cS^{t-1}$, and by using a different notion of distance than inner product (i.e., the Euclidean distance). We simply stated one of the many equivalent definitions \cite{bubecktriangle}.

Random geometric graphs are often proposed as an alternative to Erd\"{o}s-Renyi random graphs. They are quite well studied theoretically (though not nearly as much as the Erd\"{o}s-Renyi graphs), and very precise results exist regarding their connectivity, clique numbers and other structural properties \cite{guptakumarrgg,lognradius,rggmore,rggmore2,rggmore3}. For a survey of early results on geometric graphs and the analogy to results in Erd\"{o}s-Renyi graphs, we refer the reader to \cite{penrose2003random}.
A very interesting question of distinguishing an Erd\"{o}s-Renyi graph from a geometric random graph has also recently been studied  \cite{bubecktriangle}. This  will provide a way to test between the models which better fits a scenario, a potentially great practical use.

As mentioned earlier, the ``transitivity'' feature led to random geometric graphs being used extensively to model wireless networks (for example, see \cite{haenggi2009stochastic,bettstetter2002minimum}). Surprisingly, however, to the best of our knowledge, random geometric graphs are never used to model community detection problems. In this paper we take the first step towards this direction. 
Our main contributions can be classified as follows.
\begin{itemize}[noitemsep,leftmargin=5pt]
\item We define a random generative model to study canonical problems of community detection, called the {\em geometric block model} (GBM). This model takes into account a measure of proximity between nodes and this proximity measure characterizes the likelihood of two nodes being connected when they are in same or different communities. The geometric block model inherits the connectivity properties of the random geometric graphs, in particular the likelihood of ``transitivity'' in triplet of nodes (or more).

\item We experimentally validate the  GBM on various real-world datasets.  We show that many practical community structures exhibit properties of the GBM. We also compare these features with the corresponding notions in SBM to show how GBM better models data in many practical situations.

\item We propose a simple motif-based efficient algorithm for community detection on the GBM. We rigorously show that this algorithm is optimal up to a constant fraction (to be properly defined later) even in the regime of sparse graphs (average degree $\sim \log n$).

\item The motif-counting algorithms are extensively tested on both synthetic and real-world datasets. They exhibit very good performance in three real datasets, compared to the spectral-clustering algorithm (see Section~\ref{sec:exp}). Since simple motif-counting is known to be far from optimum in stochastic block model (see Section~\ref{sec:theory}), these experiments give further validation to GBM as a real-world model. 
\end{itemize}
 
Given  any simple random graph model, it is possible to generalize it to a random block model of communities much in line with the SBM. We however stress that the geometric block model is perhaps the simplest possible model of real-world communities  that also captures the transitive/geometric features of communities. Moreover, the GBM  explains behaviors of many real world networks as we will exemplify subsequently. 


\begin{table*}[htbp]
\begin{tabular}{|p{1.5cm}|p{1.5cm}|p{1.5cm}|p{1.5cm}|} 
 \hline
 Area 1 & Area 2  & same  & different \\ 
 \hline
MOD & AI  & 10& 2 \\
\hline
ARCH & MOD & 6& 1\\
\hline
ROB & ARCH  &3& 0\\
\hline
MOD & ROB & 4 &0\\
\hline
ML & MOD &7&1\\
\hline
\end{tabular}
\quad
\begin{tabular}{|p{1.5cm}|p{1.5cm}|p{1.5cm}|} 
 \hline
 Area  &  same & different   \\ 
 \hline
MOD & 19&35 \\
\hline
ARCH & 13 & 15\\
\hline
ROB & 24 &16\\
\hline
AI & 39& 32\\
\hline
ML &14 &42\\
\hline
\end{tabular}
\caption{On the left we count the number of inter-cluster edges when authors shared same affiliation and different affiliations. On the right, we count the same for intra-cluster edges.\label{table:collab}}
\end{table*}

\section{The Geometric Block Model and its Validation}
Let $V\equiv  V_1 \sqcup V_2 \sqcup \dots \sqcup V_k$ be the set of vertices that is a disjoint union of $k$ clusters, denoted by $V_1, \dots , V_k$.
Given an integer $t\geq 2$, for each vertex $u \in V$, define a random vector $Z_u \in \reals^t$ that is uniformly distributed in $\cS^{t-1} \subset \reals^t,$ the $t-1$-dimensional sphere. 

\begin{definition}[Geometric Block Model ($V, t, \beta_{i,j}, 1\le i <j \le k$)]
Given $V, t$ and a set of real numbers $\beta_{i,j} \in [-1,1], 1\le i \le j \le k$, the geometric block model is a random graph with vertices $V$ and an edge exists between $v \in V_i$ and $u \in V_j$ if and only if $\langle Z_u, Z_v\rangle \ge \beta_{i,j}$.
\end{definition}

\noindent{\bf The case of $t=2$:}
In this paper we particularly analyze our algorithm for $t=2$. In this special case, the above definition is equivalent to choosing  random variable $\theta_u$ uniformly distributed in  $[0,2\pi]$, for all $u \in V$. Then there will be an edge between two vertices $u\in V_i,v\in V_j$ if and only if 
$\cos \theta_u \cos \theta_v + \sin \theta_u \sin \theta_v = \cos(\theta_u -\theta_v) \ge \beta_{i,j}$ or $\min\{|\theta_u -\theta_v|, 2\pi -|\theta_u - \theta_v|\} \le \arccos \beta_{i,j}$. This in turn, is equivalent to choosing a random variable $X_u$ uniformly distributed in $[0,1]$ for all $u \in V$, and there exists an edge between   two vertices $u\in V_i,v\in V_j$ if and only if 
$$
d_L(X_u,X_v) \equiv \min\{|X_u - X_v|, 1- |X_u- X_v|\} \le r_{i,j},
$$
where $r_{i,j} \in [0,\frac12], 0 \le i,j \le k$, are a set of real numbers.
 
For the rest of this paper, we concentrate on the case when $r_{i,i} = r_s$ for all $i \in \{1, \dots, k\}$, which we call the ``intra-cluster distance'' and $r_{i,j} = r_d$ for all $i,  j\in  \{1, \dots, k\}, i \ne j$, which we call the ``inter-cluster distance,'' mainly for the clarity of exposition. To allow for edge density to be higher inside the clusters than across the clusters,  assume $r_s \geq r_d$.

%


The main problem that we seek to address is following. Given the adjacency matrix of a geometric block model with $k$ clusters, and $t, r_d, r_s$, $r_s \geq r_d$, find the partition $V_1, V_2, \dots, V_k$.



 
 We next give two examples of real datasets that motivate the GBM. In particular, we experiment with two different types of real world datasets in order to verify our hypothesis about geometric block model and the role of distance in the formation of edges. The first one is a dataset with academic collaboration, and the second one is a product purchase metadata from Amazon.

 \subsection{Motivation of GBM: Academic Collaboration}
 \label{subsec:m1}
We consider the collaboration network of academicians in Computer Science in 2016 (data obtained from \texttt{csrankings.org}). According to area of expertise of the authors, we 
consider five different communities: Data Management (MOD), Machine Learning and Data Mining (ML), Artificial Intelligence (AI), Robotics (ROB), Architecture (ARCH). If two authors share the same affiliation, or shared affiliation in the past, we assume that they are geographically close.
We would like to hypothesize that, two authors in the same communities might collaborate even when they are geographically far. However, two authors in different communities are more likely to collaborate only if they share the same affiliation (or are geographically close).
Table \ref{table:collab} describes the number of edges across  the communities. 
 It is evident that the authors from same community are likely to collaborate irrespective of the affiliations and the authors of different communities collaborate much frequently when they share affiliations or are close geographically. This clearly indicates that the inter cluster edges are likely to form if the distance between the nodes is quite small, motivating the fact $r_d < r_s$ in the GBM.  


\subsection{Motivation of GBM: Amazon Metadata}
\label{subsec:m2}
The next dataset that we use in our experiments is the Amazon product metadata on SNAP (\url{https://snap.stanford.edu/data/amazon-meta.html}), that has 548552 products and each product is one of the following types
\{Books, Music CD's, DVD's, Videos\}. Moreover, each product has a list of attributes, for example, a book may have attributes like $\langle$``General'', ``Sermon'', ``Preaching''$\rangle$. We consider the co-purchase network over these products. We make two observations here: (1) edges get formed (that is items are co-purchased) more frequently if they are similar, where we measure similarity by the number of common attributes between products, and (2) two products that share an edge have more common neighbors (no of items that are bought along with both those products) than two products with no edge in between. 

Figures~\ref{fig:amazon1} and \ref{fig:amazon2} show respectively average similarity of products that were bought together, and not bought together. From the distribution, it is quite evident that edges in a co-purchase network gets formed according to distance, a salient feature of random geometric graphs, and the GBM. 

\begin{figure*}[htbp]
  \centering
  \begin{minipage}[b]{0.4\textwidth}
    \includegraphics[width=0.8\textwidth]{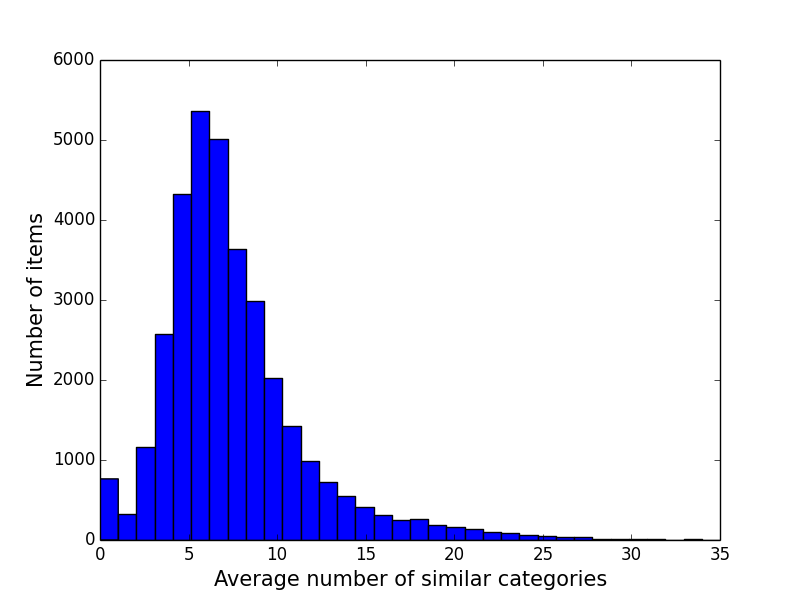}
    \caption{Histogram: similarity of products bought together (mean $\approx 6$)}
    \label{fig:amazon1}
  \end{minipage}
  \quad
  \begin{minipage}[b]{0.4\textwidth}
    \includegraphics[width=0.8\textwidth]{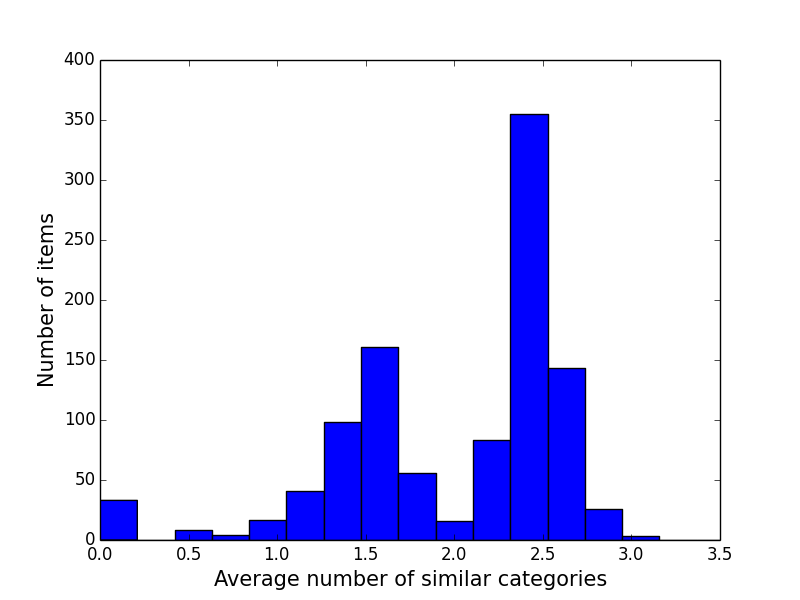}
    \caption{Histogram: similarity of products not bought together (mean$\approx 2$)}
    \label{fig:amazon2}
  \end{minipage}
\end{figure*}
\begin{figure*}[htbp]
  \centering
  \begin{minipage}[b]{0.28\textwidth}
    \includegraphics[width=\textwidth]{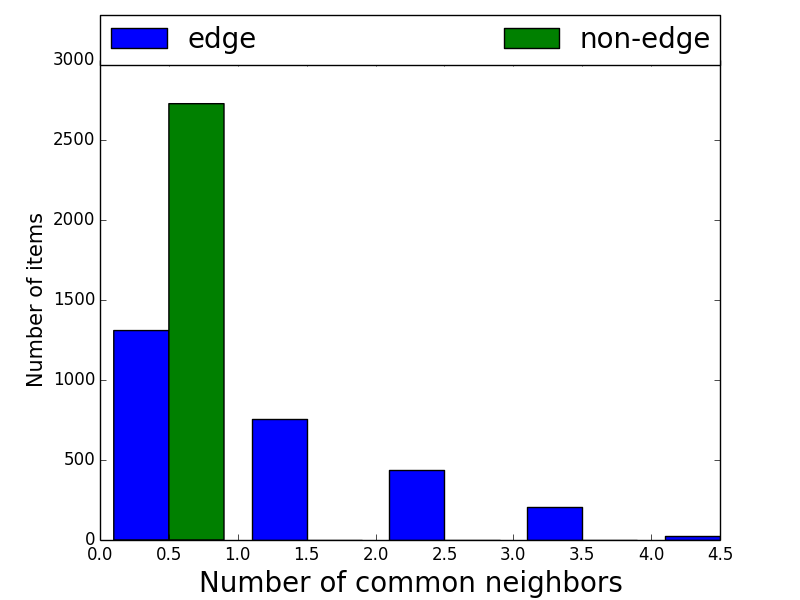}
  \end{minipage}
  \begin{minipage}[b]{0.3\textwidth}
    \includegraphics[width=\textwidth]{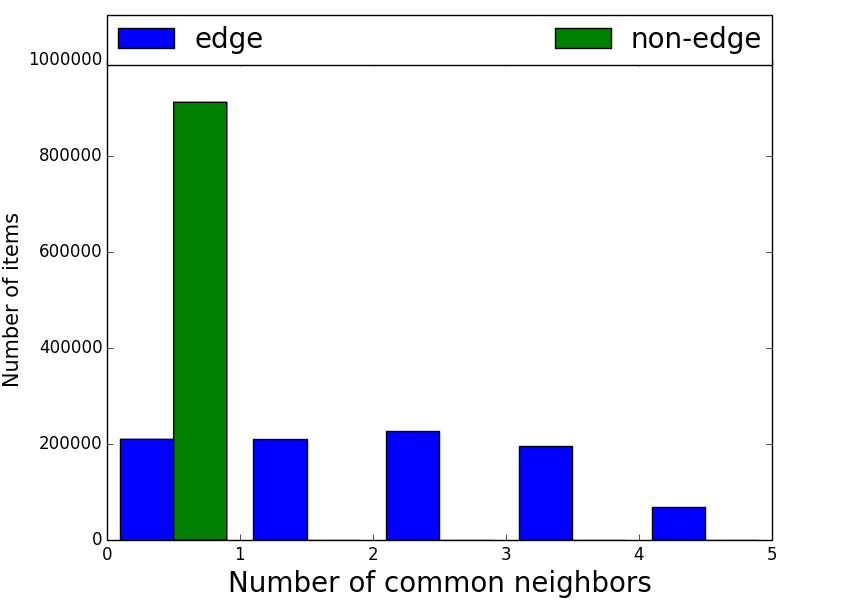}
  \end{minipage}
  \begin{minipage}[b]{0.28\textwidth}
    \includegraphics[width=\textwidth]{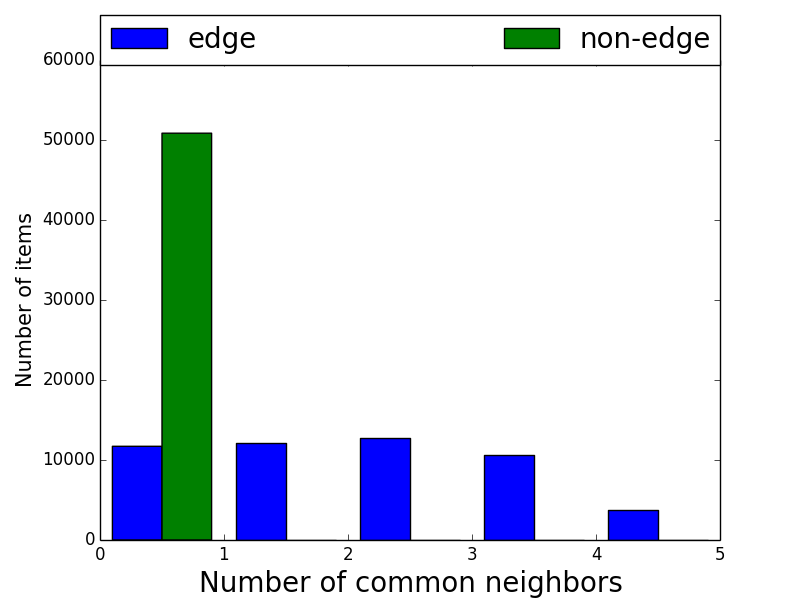}
  \end{minipage}
   \caption{Histogram of common neighbors of edges and non-edges in the co-purchase network, from left to right: Book-DVD, Book-Book, DVD-DVD}
    \label{fig:amazon3}
\end{figure*}
We next take equal number of product pairs inside Book (also inside DVD, and across Book and DVD) that have an edge in-between and do not have an edge respectively. Figure~\ref{fig:amazon3} shows that the number of common neighbors when two products share an edge is much higher than when they do not--in fact, almost all product pairs that do not have an edge in between also do not share any common neighbor. This again strongly suggests towards GBM due to its transitivity property. On the other hand, this also suggests that SBM is not a good model for this network, as in SBM, two nodes having common neighbors is independent of whether they share an edge or not.

{\bf Difference between SBM and GBM.}
It is important to stress that the network structures generated by the SBM and the GBM are quite different, and it is significantly difficult to analyze any algorithm or lower bound on GBM compared to SBM. This difficulty stems from the highly correlated edge generation in GBM (while edges are independent in SBM).  For this reason, analyses of the sphere-comparison algorithm and spectral methods for clustering on GBM cannot be derived as straight-forward adaptations. Whereas, even for simple algorithms, a property that can be immediately seen for SBM, will still require a proof for GBM.

\section{The Motif-Counting Algorithm}
Suppose, we are given a graph $G=(V,E)$ with $2$ disjoint clusters, $V_1, V_2 \subseteq V$ generated according to $GBM(V, t, r_s, r_d)$. 
Our clustering algorithm is based on counting motifs, where a motif is simply defined as a configuration of triplets in the graph. Let us explain this principle by one particular motif, a triangle. For any two vertices $u$ and $v$ in $V$, where $(u,v)$ is an edge, we count the total number of common neighbors of $u$ and $v$. We show that, whenever $r_s\geq 4r_d$, this count is different when $u$ and $v$ belong to the same cluster, compared to when they belong to different clusters. We assume $G$ is connected, because otherwise it is impossible to recover the clusters with certainty. For every pair of vertices in the graph that share an edge, we decide whether they are in the same cluster or not by this count of triangles. In reality, we do not have to check every such pair, instead we can stop when we form a spanning tree. At this point, we can transitively deduce the partition of nodes into clusters.

The main new idea of this algorithm is to use this triangle-count (or motif-count in general), since they carry significantly more information regarding the connectivity of the graph than an edge count. However, we can go to statistics of higher order (such as the two-hop common neighbors) at the expense of increased complexity. 
Surprisingly, the simple greedy algorithm that rely on triplets can separate clusters when $r_d$ and $r_s$ are $\Omega(\frac{\log n}n)$, which is also a minimal requirement for connectivity of random geometric graphs \cite{penrose2003random}. Therefore this algorithm is optimal up to a constant factor. It is interesting to note that this motif-counting algorithm is not optimal for SBM (as we observe), in particular, it will not detect the clusters in the sparse threshold region of $\frac{\log{n}}{n}$, however, it does so for GBM.


The pseudocode of the algorithm is described in Algorithm ~\ref{alg:alg1}. The algorithm looks at individual pairs of vertices to decide whether they belong to the same cluster or not. We go over pair of vertices and label them same/different, till we have enough labels to partition the graphs into clusters. 

At any stage, the algorithm picks up an unassigned node $v$ and queries it with another  node $u: (u,v) \in E$ that has already been assigned to one of the clusters.  Note that it is always possible to find such a vertex $v$ because otherwise the graph would not be connected.
To decide whether these two points $u$ and $v$ belong to the same cluster, the algorithm calls a subroutine named \texttt{process}.  The \texttt{process} function counts the number of common neighbors of $u$ and $v$ to make a decision. The node $v$ is assigned to its respective cluster depending upon the output of \texttt{process} subroutine. This procedure is continued till all nodes in $V$ are assigned to one of the clusters.

\captionof{algorithm}{Cluster recovery in GBM}
\begin{algorithmic}[1]\label{alg:alg1}
{\scriptsize
\REQUIRE GBM $G = (V,E)$, $r_s, r_d$
\ENSURE  $V = V_1\sqcup  V_2$
\STATE Choose any $u\in V$
\STATE $V_1\leftarrow \{u\} $, $V_2 \leftarrow \emptyset$
\WHILE{$ V\neq V_1\sqcup V_2$} 
\STATE Choose $(u,v)\in E \mid u\in V_1\sqcup V_2, v\in V\setminus (V_1\sqcup V_2)$
\IF{{\rm process}($u,v,r_s,r_d$)}
\IF{$u\in V_1$}
\STATE $V_1\leftarrow V_1\cup \{v\}$
\ELSE
\STATE $V_2\leftarrow V_2\cup \{v\}$
\ENDIF
\ELSE
\IF{$u\in V_1$}
\STATE $V_2\leftarrow V_2\cup \{v\}$
\ELSE
\STATE $V_1\leftarrow V_1\cup \{v\}$
\ENDIF
\ENDIF
\ENDWHILE
}
\end{algorithmic}
\captionof{algorithm}{\texttt{process}}
\begin{algorithmic}[1]\label{alg:process}
{\scriptsize
\REQUIRE $u$,$v$, $r_s$, $r_d$
\ENSURE  true/false
\STATE count $\leftarrow |\{z: (z,u)\in E, (z,v)\in E\}|$
\IF{$|\frac{\text{count}}{n} - E_S(r_d,r_s)| < |\frac{\text{count}}{n} -E_D(r_d,r_s)|$}
\RETURN true
\ENDIF
\RETURN false
}
\end{algorithmic}

The \texttt{process} function counts the number of common neighbors of two nodes and then compares the difference of the count with
two functions of $r_d$ and $r_s$, called $E_D$ and $E_S$. 

 \correct{We have compiled the distribution of the number of common neighbors along with other motifs (other patterns of triplets, given $(u,v)\in E$) in Table ~\ref{tab:tab1}. We  provide the values of  $E_D$ and $E_S$  in Theorem \ref{th:main} for the regime of $r_s> 4r_d$.}  In this table we have assumed that there are only two clusters of equal size. The functions change when the cluster sizes are different. Our analysis described in later sections can be used to calculate new function values. In the table, $u\sim v$ means $u$ and $v$ are in the same cluster.

\begin{table*}[htbp]
\begin{center}
\resizebox{\textwidth}{!}{
\begin{tabular}{|c|p{3.2cm}|p{3cm}|p{3cm}|p{3.6cm}|} 
 \hline
 Motif: $(u,v) \in E$ & \multicolumn{2}{c}{Distribution of count ($r_s>2r_d$)}&    \multicolumn{2}{c|}{Distribution of count ($r_s\le 2r_d$)} \\
 $d_L(X_u,X_v) =x$      & $u \sim v,  x\leq r_s$ & $u \nsim v, x\leq r_d$ &  $u \sim v, x\leq r_s$ & $u \nsim v, x\leq r_d$\\
 \hline
Motif 1: $z\mid (z,u)\in E, (z,v)\in E$   &  {\correct{${\rm Bin}(\frac{n}{2}-2,2r_s-x)+\mathbb{1}\{x\leq 2r_d\}{\rm Bin}(\frac{n}{2},2r_d-x)$}} & ${\rm Bin}(n-2,2r_{d})$ &  {\correct{${\rm Bin}(\frac{n}{2}-2,2r_s-x)+{\rm Bin}(\frac{n}{2},2r_d-x)$}}  & \correct{${\rm Bin}(n-2, \min(r_{s}+r_d-x,2r_d))$}\\
 \hline
Motif 2: $z\mid (z,u)\in E, (z,v)\notin E$ & {\correct{${\rm Bin}(\frac{n}{2}-2,x)+{\rm Bin}(\frac{n}{2},\min(x,2r_d))$ }}& \correct{${\rm Bin}(\frac{n}{2}-1,2(r_{s}-r_{d}))  $ }& \correct{${\rm Bin}(n-2, x)$} & \correct{${\rm Bin}(\frac{n}{2}-1,r_s-r_d+x + \max(r_s-x-r_d,0)) + {\rm Bin}(\frac{n}{2}-1,\max(x+r_d-r_s,0)) $ }\\
 \hline
Motif 3:  $z\mid (z,u)\notin E, (z,v)\in E$  &{\correct{${\rm Bin}(\frac{n}{2}-2,x)+{\rm Bin}(\frac{n}{2},\min(x,2r_d))$ }} & \correct{${\rm Bin}(\frac{n}{2}-1,2(r_{s}-r_{d})) $ }& \correct{${\rm Bin}(n-2, x)$} &  \correct{${\rm Bin}(\frac{n}{2}-1,r_s+r_d-x + \max(r_s-x-r_d,0)) + {\rm Bin}(\frac{n}{2}-1,\max(x+r_d-r_s,0)) $ }\\
 \hline
 Motif 4: $z\mid (z,u)\notin E, (z,v)\notin E$  &{\correct{${\rm Bin}(\frac{n}{2}-2,1-(x+2r_s))+\mathbb{1}\{x \le 2r_d\}{\rm Bin}(\frac{n}{2},1-(x+2r_d))+\mathbb{1}\{x > 2r_d\}{\rm Bin}(\frac{n}{2},1-4r_d) $}} & \correct{${\rm Bin}(n-2,1-2r_s) $ }& \correct{${\rm Bin}(\frac{n}{2}-2,1-(x+2r_s))+{\rm Bin}(\frac{n}{2},1-(x+2r_d))$ } &  \correct{$\mathbb{1}\{x \le r_s-r_d\}{\rm Bin}(n-2,1-2r_s)+\mathbb{1}\{x >r_s-r_d\}{\rm Bin}(n-2,1-(x+r_s+r_d))$ }\\
\hline
\end{tabular}}
\end{center}
\caption{Distribution of motif count for an edge $(u,v)$ conditioned on the distance between them $d_L(X_u, X_v) = x$, when there are two equal sized clusters. Here ${\rm Bin}(n,p)$ denotes a binomial random variable with mean $np$.\label{tab:tab1}}
\end{table*}



 Similarly, the \texttt{process} function can be run on other set of motifs by fixing two nodes. On considering a larger set of motifs, the \texttt{process} function can take a majority vote over the decisions received from different motifs.   Note that, our algorithm counts motifs only for edges, and does not count motifs for more than $n-1$ edges, as there are only $n$ vertices to be be assigned to clusters.
 \begin{remark}
If we are given k clusters ($k>2$), our analysis can be extended to calculate new values of $E_S$ and $E_D$. If there exists a palpable gap between the two values, we can extend the Algorithm \ref{alg:alg1} to  identify the true assignment of each node.
 \end{remark}

\section{Analysis of the Algorithm}
~\label{sec:theory}
The critical observation that we have to make to analyze the motif-counting algorithm is the fact that given a GBM graph $G(V,E)$ with two clusters $V = V_1 \sqcup V_2$, and  a pair of vertices $u,v \in V$, the events $\cE^{u,v}_z, z \in V$ of any other vertex $z$ being a common neighbor of 
both $u$ and $v$ given $(u,v) \in E$ are dependent (this is not true in SBM); however given the distance between the corresponding random variables  \correct{$d_L(X_u,X_v) =x$}, the events 
are independent. 
Moreover, the probabilities of $\cE^{u,v}_z\mid (u,v) \in E$ are different when $u$ and $v$ are in the same cluster and when they are in different clusters. Therefore the count of the common neighbors are going to be different, and substantially separated with high probability 
 for two vertices in cases when they are from the same cluster or from different clusters. This will lead the function \texttt{process} to correctly characterize two vertices as being from same or different clusters with high probability.  

Let us now show this more formally. We have the following two lemmas for a GBM graph $G(V,E)$ with two equal-sized (unknown) clusters $V = V_1 \sqcup V_2$, and parameters $r_s,r_d$.

\begin{lemma}\label{lem:indep}
For any two vertices $u,v \in V_i: (u,v) \in E, i =1,2$ belonging to the same cluster,  the event $\cE^{u,v}_z \equiv \{(u,z), (v,z) \in E\}$ is independent with $\cE^{u,v}_w \equiv \{(u,w), (v,w) \in E\}$ conditional on the distance between $X_u$ and $X_v$, $d_L(X_u,X_v) = x$.
\end{lemma}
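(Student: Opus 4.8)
The plan is to reduce conditional independence given the distance to conditional independence given the \emph{exact} positions of $u$ and $v$, where it is essentially immediate, and then transport it back using the rotational symmetry of the uniform measure on the circle.

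First I would condition on the exact positions $X_u=a$ and $X_v=b$. Since the cluster memberships of $u,v,z,w$ are fixed, each event can be written as $\cE^{u,v}_z=\{d_L(X_z,a)\le \rho_z\}\cap\{d_L(X_z,b)\le \rho_z\}$, where $\rho_z\in\{r_s,r_d\}$ is determined solely by whether $z\in V_i$ or not (both radii coincide because $u,v\in V_i$); analogously for $w$. Thus, once $a,b$ are fixed, $\cE^{u,v}_z$ is $\sigma(X_z)$-measurable and $\cE^{u,v}_w$ is $\sigma(X_w)$-measurable. Because the position variables $\{X_u\}_{u\in V}$ are mutually independent, $X_z$ and $X_w$ are independent of each other and of $(X_u,X_v)$, so
\[
P\!\left(\cE^{u,v}_z\cap \cE^{u,v}_w\mid X_u=a,X_v=b\right)=P\!\left(\cE^{u,v}_z\mid X_u=a,X_v=b\right)\,P\!\left(\cE^{u,v}_w\mid X_u=a,X_v=b\right).
\]

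The key step is to observe that each factor depends on $(a,b)$ only through $d_L(a,b)$. Indeed, $P(\cE^{u,v}_z\mid X_u=a,X_v=b)$ is the Lebesgue measure of $\{y:d_L(y,a)\le\rho_z,\ d_L(y,b)\le\rho_z\}$, i.e. the length of the intersection of two arcs of half-width $\rho_z$ centred at $a$ and $b$ on the unit-circumference circle; by translation invariance of $d_L$ this length is a function of the gap $d_L(a,b)$ alone, which I call $\phi_z(x)$, and likewise $\phi_w(x)$. Substituting, the displayed identity reads $\phi_z(d_L(a,b))\,\phi_w(d_L(a,b))$ for every admissible pair $(a,b)$.

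Finally I would integrate out the positions. The conditional law of $(X_u,X_v)$ given $d_L(X_u,X_v)=x$ is supported on pairs at gap exactly $x$, so applying the tower property and using that the integrands are constant on that support gives
\[
P\!\left(\cE^{u,v}_z\cap \cE^{u,v}_w\mid d_L(X_u,X_v)=x\right)=\phi_z(x)\phi_w(x)=P\!\left(\cE^{u,v}_z\mid d_L=x\right)P\!\left(\cE^{u,v}_w\mid d_L=x\right),
\]
which is the claimed conditional independence. The condition $x\le r_s$ only ensures the conditioning is consistent with $(u,v)\in E$ and plays no further role, since the positions of $z,w$ are independent of those of $u,v$. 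The one genuinely delicate point—the part that uses geometry rather than mere independence—is the reduction in the previous paragraph: one must verify that conditioning on the single scalar $x$ instead of on the full pair $(a,b)$ does not reintroduce dependence, and this is exactly what the homogeneity (rotational invariance) of the circle guarantees.
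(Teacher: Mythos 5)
Your proof is correct and follows essentially the same route as the paper's: both reduce the claim to the independence of $X_z$ and $X_w$ once the positions of $u$ and $v$ are pinned down, each event's conditional probability being the length of an intersection of two arcs, which depends only on the gap $x$ (e.g.\ $2r_s-x$ when $z$ is in the same cluster). The only difference is that you make explicit the step the paper leaves implicit --- passing from conditioning on the exact pair $(X_u,X_v)$ to conditioning on the scalar $d_L(X_u,X_v)$ via translation invariance --- and you handle arbitrary cluster memberships of $z,w$, whereas the paper writes out only the case $z,w\in V_i$ and asserts the remaining cases are similar.
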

\begin{proof}
Let us assume that $z,w$ belong to the same cluster as that of $u,v$ (the proof is similar for other cases too, and we omit those cases here). The event $\cE^{u,v}_z  \cap \cE^{u,v}_w$ given $d_L(X_u,X_v) = x$ is equivalent to having  both $X_z$ and $X_w$ (the random variable corresponding to vertices $z$ and $w$ respectively) within a range of $2r_s-x$ if $x\leq 2r_s$ and can never happen if $x>2r_s$. Hence  $\Pr (\cE^{u,v}_z  \cap \cE^{u,v}_w| d_L(X_u,X_v) = x)  = (2r_s-x)^2$ for $x\leq 2r_s$.

On the other hand, the event $\cE^{u,v}_z$  given $d_L(X_u,X_v) = x$ is equivalent to having  $X_z$ within a range of $2r_s-x$ if $x\leq 2r_s$ and 0 otherwise. Similarly the event $\cE^{u,v}_w$ given  $d_L(X_u,X_v) = x$ is equivalent to having $X_w$ within a range of $2r_s-x$.
Therefore $\Pr (\cE^{u,v}_z  \cap \cE^{u,v}_w| d_L(X_u,X_v) = x) = \Pr (\cE^{u,v}_z | d_L(X_u,X_v) = x)\Pr ( \cE^{u,v}_w| d_L(X_u,X_v) = x)$.
\end{proof}

This observation leads to the derivation of distributions of counts of triangles involving $(u,v) \in E$ for the cases when $u$ and $v$ are in the same cluster and when they are not.

\begin{lemma}\label{lem:sep}
For any two vertices $u,v \in V_i: (u,v) \in E, i =1,2$ belonging to the same cluster and  $d_L(X_u,X_v) = x$, the count of common neighbors $C_{u,v} \equiv |\{z\in V: (z,u), (z,v) \in E\}|$ is a random variable distributed  according to ${\rm Bin}(\frac{n}{2}-2,2r_s-x)+{\rm Bin}(\frac{n}{2},2r_d-x)$ if $x \leq \min(2r_d,r_s)$ and 
according to 
${\rm Bin}(\frac{n}{2}-2, 2r_s-x)$ if $2r_d < x \le r_s$, where ${\rm Bin}(n,p)$ is a binomial random variable with mean $np$.
\end{lemma}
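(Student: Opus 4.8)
\textbf{Proof proposal for Lemma \ref{lem:sep}.}

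The plan is to count the common neighbors $z$ of the edge $(u,v)$ by partitioning the candidate vertices $z$ according to which cluster they belong to, and then to argue that each group contributes an independent binomial term. Fix $u,v \in V_i$ in the same cluster with $d_L(X_u,X_v)=x$. A vertex $z$ is a common neighbor precisely when $z$ is within distance $r_s$ of both $u$ and $v$ if $z$ lies in the same cluster as $u,v$, and within distance $r_d$ of both if $z$ lies in the other cluster (since same-cluster edges use radius $r_s$ and cross-cluster edges use radius $r_d$). So the total count $C_{u,v}$ splits as a sum over the roughly $\frac n2 - 2$ other vertices in the home cluster and the $\frac n2$ vertices in the other cluster.

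First I would compute the per-vertex success probability in each group. For a same-cluster candidate $z$ (with $X_z$ uniform on $[0,1]$ under the circular metric $d_L$), being within $r_s$ of both endpoints means $X_z$ lies in the intersection of the two arcs of radius $r_s$ centered at $X_u$ and $X_v$; since the endpoints are at circular distance $x$, this intersection is a single arc of length exactly $2r_s - x$ whenever $x \le 2r_s$ (and empty otherwise). This is the same geometric fact already established in the proof of Lemma \ref{lem:indep}, so each same-cluster $z$ is a common neighbor independently with probability $2r_s - x$, giving the term ${\rm Bin}(\frac n2 - 2, 2r_s - x)$. For an other-cluster candidate $z$, the relevant radius is $r_d$, so by the identical arc-intersection argument the success probability is $2r_d - x$ when $x \le 2r_d$, yielding ${\rm Bin}(\frac n2, 2r_d - x)$; when $x > 2r_d$ the two radius-$r_d$ arcs are disjoint and no other-cluster vertex can be a common neighbor, so this term vanishes. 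This is exactly the case split in the statement: when $x \le \min(2r_d, r_s)$ both terms survive, while when $2r_d < x \le r_s$ only the home-cluster term remains.

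The independence needed to conclude that each group is genuinely binomial—and that the two groups combine into a sum of independent binomials—follows from Lemma \ref{lem:indep}: conditioned on $d_L(X_u,X_v)=x$, the positions $X_z$ for distinct $z$ are mutually independent (they are i.i.d.\ uniform and the conditioning only fixes $X_u, X_v$), so the indicator events $\cE^{u,v}_z$ are independent across all $z$. Hence each group contributes a sum of i.i.d.\ Bernoulli trials with the computed parameter, i.e.\ a binomial, and the two groups are independent of each other. The main thing to be careful about is the geometry of the arc intersection on the circle: one must verify that the two radius-$\rho$ arcs centered at points $x$ apart intersect in a single connected arc of length $2\rho - x$ rather than two pieces, which holds because $x \le \tfrac12$ and $\rho \le \tfrac12$ keep everything within less than a full wraparound; this is the only step requiring genuine case analysis, and it is already implicitly settled by the computation in Lemma \ref{lem:indep}.
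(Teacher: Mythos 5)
Your proposal is correct and follows essentially the same route as the paper's proof: condition on $d_L(X_u,X_v)=x$, compute the per-vertex common-neighbor probability separately for the $\frac n2-2$ same-cluster candidates (arc intersection of length $2r_s-x$) and the $\frac n2$ other-cluster candidates ($2r_d-x$ for $x\le 2r_d$, zero otherwise), and invoke conditional independence across $z$ to conclude the binomial form. Your added care about the arc-intersection geometry and the explicit appeal to Lemma \ref{lem:indep} only make explicit what the paper leaves implicit.
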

\begin{lemma}\label{lem:sep2}
For any two vertices $u\in V_1,v \in V_2: (u,v) \in E$   belonging to different clusters and $d_L(X_u,X_v) = x$, the count of common neighbors $C_{u,v} \equiv |\{z\in V: (z,u), (z,v) \in E\}|$ is a random variable distributed according to ${\rm Bin}(n-2,2r_{d})$ when $r_s > 2r_d$ and according to ${\rm Bin}(n-2,\min(r_{s} + r_d -x,2r_d))$ when $r_s \leq 2r_d$. 
\end{lemma}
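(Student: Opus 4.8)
The plan is to reduce the count $C_{u,v}$ to a sum of conditionally independent Bernoulli trials and then compute the common success probability by an explicit arc-length calculation on the circle. By the rotational symmetry of the uniform distribution on $[0,1]$ (with the circular metric $d_L$), I may fix the positions $X_u = 0$ and $X_v = x$ without loss of generality. Conditioned on these positions, for any other vertex $z$ the event $\mathcal{E}^{u,v}_z = \{(z,u),(z,v)\in E\}$ depends only on the independent random variable $X_z$; hence, exactly as in Lemma~\ref{lem:indep}, the events $\{\mathcal{E}^{u,v}_z\}_{z}$ are mutually independent conditional on $d_L(X_u,X_v)=x$, and $C_{u,v}$ is a sum of $n-2$ independent Bernoulli variables.

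The second step is to identify the success probability for each $z$ and to show it does not depend on which cluster $z$ lies in. If $z \in V_1$ then $(z,u)\in E$ requires $d_L(X_z,0)\le r_s$ and $(z,v)\in E$ requires $d_L(X_z,x)\le r_d$, so $z$ is a common neighbor iff $X_z$ lies in the intersection of the arc of radius $r_s$ about $0$ and the arc of radius $r_d$ about $x$; its probability $p_1$ is the length of this intersection. If $z\in V_2$ the roles of $r_s$ and $r_d$ swap, giving probability $p_2$ equal to the length of the intersection of the radius-$r_d$ arc about $0$ and the radius-$r_s$ arc about $x$. The reflection $y\mapsto x-y$ of the circle (which exchanges $0$ and $x$) carries the first configuration to the second and preserves arc length, so $p_1=p_2 =: p$; therefore all $n-2$ remaining vertices contribute i.i.d.\ ${\rm Bernoulli}(p)$ trials, yielding $C_{u,v}\sim {\rm Bin}(n-2,p)$.

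It remains to evaluate $p$, the length of $[-r_s,r_s]\cap[x-r_d,x+r_d]$ with $0\le x\le r_d$ (the constraint $x\le r_d$ coming from $(u,v)\in E$ with $u,v$ in different clusters), working in the small-radius regime so that these arcs form a single interval and wraparound contributes nothing. When $r_s>2r_d$, since $x\le r_d$ we have $x+r_d\le 2r_d<r_s$ and $x-r_d\ge -r_d>-r_s$, so $[x-r_d,x+r_d]\subseteq[-r_s,r_s]$ and $p=2r_d$, giving ${\rm Bin}(n-2,2r_d)$. When $r_d\le r_s\le 2r_d$, I would compute the intersection endpoints directly: using $r_s\ge r_d$ the left endpoint is $\max(-r_s,x-r_d)=x-r_d$, while the right endpoint is $\min(r_s,x+r_d)$, so $p=\min(r_s,x+r_d)-(x-r_d)=\min(2r_d,\,r_s+r_d-x)$, giving ${\rm Bin}(n-2,\min(r_s+r_d-x,2r_d))$.

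The routine parts are the two endpoint comparisons; the one step warranting care is the reflection argument establishing $p_1=p_2$, since it is what collapses what might otherwise look like a sum of two distinct binomials (one over $V_1$ and one over $V_2$) into the single ${\rm Bin}(n-2,\cdot)$ claimed. I would also be careful to invoke $r_s\ge r_d$ when simplifying the left endpoint, and to note the small-radius (sparse) regime, which rules out a second, wrapped intersection component on the circle that would otherwise complicate the arc-length formula.
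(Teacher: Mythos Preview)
Your argument is correct and follows essentially the same route as the paper's proof: condition on $d_L(X_u,X_v)=x$, observe independence across the remaining $n-2$ vertices, and compute the common-neighbor probability as the length of an arc intersection; the paper simply asserts $\Pr(\cE^{u,v}_z(x))=\min\{2r_d,\,r_s+r_d-x\}$ for every $z\in V\setminus\{u,v\}$ without separating the two cluster cases. Your reflection argument $y\mapsto x-y$ explicitly justifies why $p_1=p_2$, a point the paper leaves implicit, and your endpoint analysis (using $r_s\ge r_d$ and $x\le r_d$) is the careful way to recover the paper's formula.
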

 Here let us give the proof of Lemma \ref{lem:sep}. The proof of Lemma \ref{lem:sep2} will follow similarly. These expressions can also be generalized  when the clusters are of unequal sizes, but we omit those for clarity of exposition. 
\begin{proof}[Proof of Lemma \ref{lem:sep}]
Let $X_w\in [0,1]$ be the uniform random variable associated with $w \in V$. Let us also denote by $\dist(X,Y) \equiv \min\{|X - Y|, 1- |X- Y|\}, X,Y \in \reals$. Without loss of generality, assume $u,v \in V_1$. For any vertex $z \in V$, let $\cE^{u,v}_z(x) \equiv \{(u,z), (v,z) \in E \correct{\mid (u,v) \in E, \dist(u,v)=x} \}$ be the event that $z$ is a common neighbor \correct{given that the vertices $u$ and $v$ have an edge and the distance between those vertices is $x$}.
For $z\in V_1$,
\begin{align*}
\Pr(\cE^{u,v}_z(x))  = 2r_s - x, 0 \le x \le r_s.
\end{align*}
For $z \in V_2$,  
\begin{align*}
\Pr(\cE^{u,v}_z(x)) =\begin{cases}
  2r_d - x , \text{ if $x\leq 2r_d$}\\
 0 , \text{ if $2r_d<x\leq r_s$.}
 \end{cases}
\end{align*}
\correct{Since we are conditioning on the fact that the vertices $u$ and $v$ have an edge, $x$ can take a maximum value of $r_s$.}
Now since there are $\frac{n}2-2$ points in $V_1 \setminus \{u,v\}$ and $\frac{n}2$ points in $V_2$, we have the statement of the lemma. 
\end{proof}
The proof of Lemma \ref{lem:sep2} is similar and we delegate it to  Appendix \ref{app:lem3}.

Consider the case when $r_s\ge 4r_d$. The above lemmas show that for all values of $d_L(X_u,X_v)$, the expected count of the number of triangles involving $(u,v)\in E$ is higher when $u$ and $v$ belong to the same cluster as opposed to different clusters. By leveraging the concentration of binomial random variables, we bound the count of the number of triangles in these two cases. We use Lemma \ref{lem:sep} to  first estimate the minimum value of triangle count when $u$ and $v$ belong to the same cluster and Lemma \ref{lem:sep2} to estimate the maximum value of triangle count when $u$ and $v$ belong to different clusters. Our algorithm will correctly resolve whether two points in the same cluster or not  if the minimum value in the former case is higher than the maximum value in the later.
While more general statements are possible, we give a theorem concentrating on the special case when $r_s, r_d \sim \frac{\log n}{n}$, which is at the order of the connectivity threshold of geometric random graphs \cite{penrose2003random}.

\begin{theorem}\label{th:main}
Let $r_s = \frac{a\log{n}}{n}$ and $r_d = \frac{b \log{n}}{n}$,  $a > 4b$, and $g(y) \equiv y+\sqrt{2a-y}+\sqrt{2b-y}.$ Algorithm \ref{alg:alg1} with
$E_D=(2b + \sqrt{6b}) \frac{\log n}n$ and 
$$E_S= \min\left(\frac{a}{2}-\sqrt{a}, a+b-\max_{0\leq \nu \leq 2b}g(\nu) \right)\frac{\log n}n,
$$
 can recover the clusters $V_1, V_2$ accurately  with a probability of $1 -o(1)$ if 
$$\min\left(\frac{a}{2}-\sqrt{a}, a+b-\max_{0\leq \nu \leq 2b}g(\nu) \right) \ge 2b + \sqrt{6b}.$$
\end{theorem}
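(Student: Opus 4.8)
The plan is to show that the quantity $C_{u,v}/n$ computed by \texttt{process} lands, with high probability, above the threshold $E_S$ whenever $u,v$ lie in the same cluster and below $E_D$ whenever they lie in different clusters; since the hypothesis of the theorem forces $E_S \ge E_D$, the comparison inside \texttt{process} then returns the correct label on every queried edge. The whole argument rests on Lemmas \ref{lem:indep}--\ref{lem:sep2}: conditioned on the realized distance $d_L(X_u,X_v)=x$, the indicator events $\cE^{u,v}_z$ are independent across $z$ (Lemma \ref{lem:indep}), so $C_{u,v}$ is an honest sum of independent Bernoulli trials -- a sum of two binomials in the same-cluster case, a single binomial in the different-cluster case -- and standard binomial tail (Chernoff) bounds apply directly.

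First I would dispose of the different-cluster case, which is cleaner because for $r_s>2r_d$ (guaranteed by $a>4b$) Lemma \ref{lem:sep2} gives $C_{u,v}\sim{\rm Bin}(n-2,2r_d)$ \emph{independently of} $x$, with mean $\approx 2b\log n$. An upper-tail bound with deviation $\sqrt{6b}\log n$ then shows $C_{u,v}/n\le (2b+\sqrt{6b})\tfrac{\log n}{n}=E_D$ except with small probability, the constant $\sqrt 6$ being exactly what makes the resulting exponent at least $\log n$.

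The same-cluster case is the technical heart, because the distribution of $C_{u,v}$ depends on the unknown distance $x$, yet the algorithm must use a single threshold valid for every $x\in[0,r_s]$. Writing $x=\nu\tfrac{\log n}{n}$, Lemma \ref{lem:sep} splits into two regimes. For $x\le 2r_d$ (i.e. $\nu\le 2b$) the count is ${\rm Bin}(\tfrac n2-2,2r_s-x)+{\rm Bin}(\tfrac n2,2r_d-x)$ with mean $(a+b-\nu)\log n$; applying a lower-tail bound to each binomial with deviations $\sqrt{2a-\nu}\log n$ and $\sqrt{2b-\nu}\log n$ respectively, then summing (a union over the two binomials), yields $C_{u,v}/n\ge (a+b-g(\nu))\tfrac{\log n}{n}$. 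Taking the worst case over this regime amounts to maximizing $g(\nu)$ on $0\le\nu\le 2b$, which produces the second term of the $\min$ defining $E_S$. For $2r_d<x\le r_s$ only the single binomial ${\rm Bin}(\tfrac n2-2,2r_s-x)$ survives; its lower-tail bound, written in terms of $w=2a-\nu$, equals $(\tfrac w2-\sqrt w)\log n$, which is increasing for $w\ge 1$ and hence worst at $x=r_s$, giving the first term $\tfrac a2-\sqrt a$. The minimum of the two regimes is exactly $E_S$, so $C_{u,v}/n\ge E_S$ with high probability.

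Finally I would assemble the pieces: \texttt{process} is invoked only on edges forming a spanning tree, hence on at most $n-1$ pairs, and a union bound over these pairs shows that every label is correct with probability $1-o(1)$, after which the transitive deduction in Algorithm \ref{alg:alg1} recovers $V_1,V_2$ exactly. The main obstacle I anticipate is the uniform-in-$x$ optimization in the same-cluster case -- combining the two lower-tail estimates correctly and verifying that the extremal $x$ sits where claimed in each regime (and that the two regimes meet to give precisely the stated $\min$). A secondary, quantitative subtlety is ensuring the per-edge error is genuinely $o(1/n)$ rather than merely $O(1/n)$, so that the union bound over $\Theta(n)$ edges still closes to $1-o(1)$; this is exactly where the strict separation forced by $a>4b$, together with the slack between $E_S$ and $E_D$, must be spent.
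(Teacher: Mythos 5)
Your proposal is correct and follows essentially the same route as the paper's proof: conditioning on the distance $x$ to get independence (Lemma \ref{lem:indep}), reading off the binomial distributions from Lemmas \ref{lem:sep} and \ref{lem:sep2}, applying Chernoff lower/upper tails with constants $2$ and $3$ to obtain exactly the deviation terms $\sqrt{2a-\nu}$, $\sqrt{2b-\nu}$, and $\sqrt{6b}$, optimizing the same-cluster bound over the two regimes of $x$ to produce the $\min$ in $E_S$, and closing with a union bound over the $n-1$ queried edges (each failing with probability $O(1/(n\log n))$, so the total error is $O(1/\log n)=o(1)$). The subtleties you flag are precisely the ones the paper resolves in the same way.
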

\begin{proof}
\correct{We need to consider the case of $r_s> 2r_d$ from Lemma \ref{lem:sep} and Lemma \ref{lem:sep2}.}
Let $Z$ denote the random variable that equals the number of common neighbors of two nodes $u,v \in V: (u,v) \in E$. Let us also denote $\mu_s= \avg(Z|u \sim v, \correct{\dist(X_u,X_v)=x)}$ and $\mu_d = \avg(Z|u \nsim v, \correct{\dist(X_u,X_v)=x)}$, where $u\sim v$ means $u$ and $v$ are in the same cluster. 
We can easily find $\mu_s$ and $\mu_d$ from Lemmas \ref{lem:sep}, \ref{lem:sep2}. We see that, 
$\mu_s  =  \begin{cases}
  n(r_s+r_d-x) - 4r_s + 2x , \text{ if $x\leq 2r_d$}\\
 (\frac{n}{2}-2)(2r_s-x) , \text{ if $2r_d<x\leq r_s$}
 \end{cases}
 \text{and } \quad \mu_d = (n-2)2r_d.$
%
%

The value of $\mu_s$ is greater than that of $\mu_d$ for all values of $x$ when $r_s\geq 4r_d$. We try to bound the values of $Z$ in these two cases and then achieve the condition of correct resolution.
Given a fixed $d_L(X_u,X_v)$, since $Z$ is a sum of independent binary random variables, using the Chernoff bound, $\Pr(Z < (1-\delta)\avg(Z)) \leq e^{-\delta^2\avg(Z)/2} = \frac{1}{n\log n},$ when $\delta = \sqrt{\frac{2(\log{n}+\log\log n)}{\avg(Z)}}$. 
Now  when $u,v$ belong to the same cluster and $\dist (X_u,X_v)=x$, with probability at least $1-\frac2{n\log n}$, 
$$ Z \ge F_{\sim}(x) \equiv \begin{cases}
  n(r_s+r_d-x) - 4r_s + 2x - \sqrt{(\log n+\log\log n)(n-4)(2r_s-x)}& \\ \hspace{2in} - \sqrt{(\log n+\log\log n)n(2 r_d-x)} , & \text{ if $x\leq 2r_d$}\\
 (\frac{n}{2}-2)(2r_s-x)  - \sqrt{(\log n+\log\log n)(n-4)(2r_s-x)},& \text{ if $2r_d<x\leq r_s$.}
 \end{cases}
$$ 
Using Chernoff bound, we also know that $\Pr(Z > (1+\delta) \avg(Z)) \le e^{-\delta^2\avg(Z)/3}  = \frac{1}{n\log n},$ when $\delta = \sqrt{\frac{3(\log{n}+\log\log n)}{\avg(Z)}}$.
Hence, with probability at least  $1-\frac1{n\log n}$, $Z$ is at most $F_{\nsim} \equiv \mu_d + \sqrt{3(\log n+\log\log n)\mu_d}$ when $u,v$ belong to different clusters.
 
We calculate the minimum value of $F_\sim(x)$ over all values of $x$ to find the value closest to $F_\nsim$. When $2r_d<x<r_s$, $F_\sim(x)$ is a decreasing function with the minimum value of $(\frac{n}{2}-2)r_s  - \sqrt{(\log n+\log\log n)(n-4)r_s}$ at $x=r_s$. Plugging in $r_s=\frac{a \log n}{n}$, $r_d=\frac{b \log n}{n}$ and $x = \frac{\nu \log n}{n}$ we get that the
 algorithm will be successful to label correctly  with probability $1- \frac{3}{n\log n}$ as long as,
$$
\min(\frac{a}{2}-\sqrt{a}, \min_{0\leq \nu\leq 2b} (a+b- \nu-\sqrt{2a-\nu}-\sqrt{2b-\nu}) ) \log{n} \ge \left(2b + \sqrt{6b}\right)\log{n}.
$$
Now we need the correct assignment of vertices  for $n-1$ pairs of vertices (according to Algorithm \ref{alg:alg1}). Applying union bound over $n-1$ distinct pairs guarantees the probability of recovery as $1-3/\log n$.
\end{proof}


\begin{figure*}[t!]
  \centering
  \begin{subfigure}[t]{0.3\textwidth}
     \includegraphics[height=1.6in]{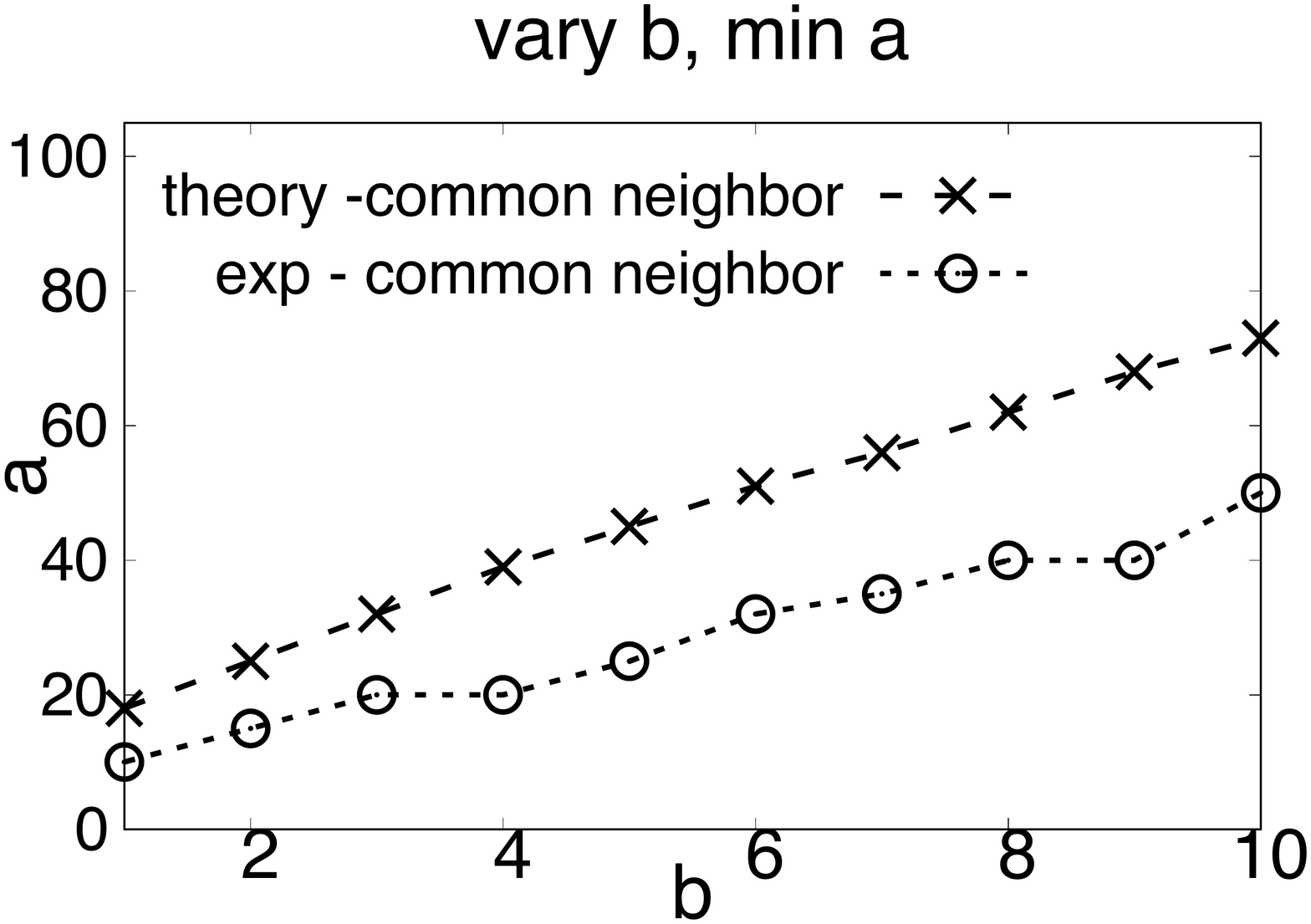}
     \caption{\small Triangle motif varying $b$ and minimum value of $a$ that satisfies the accuracy bound.}
          ~\label{fig:theorypractice}
  \end{subfigure}%
 \begin{subfigure}[t]{0.3 \textwidth}
   \includegraphics[height=1.6in]{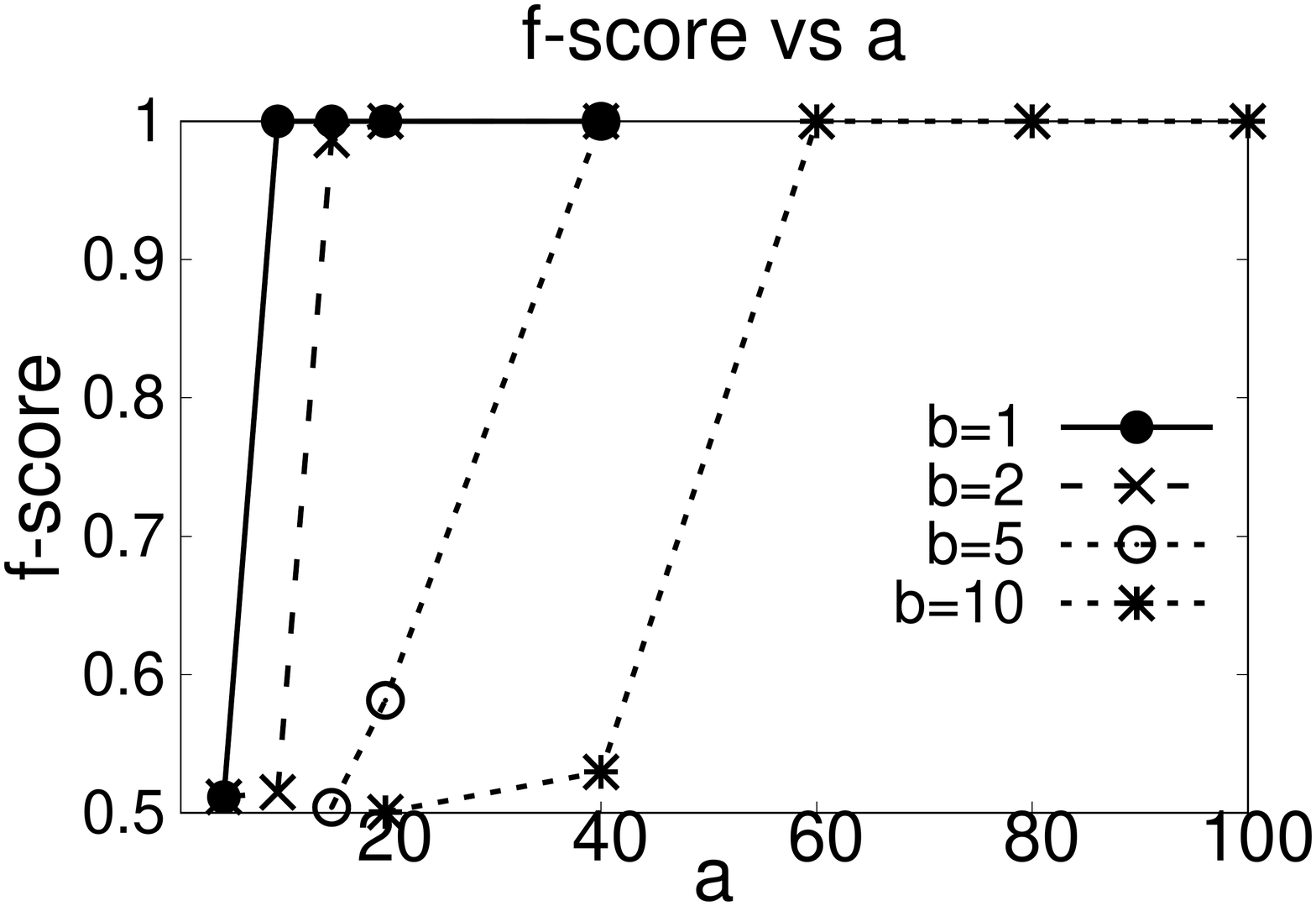}
   \caption{\small f-score with varying $a$, fixing $b$.}
       ~\label{fig:varya}
 \end{subfigure}%
 \begin{subfigure}[t]{0.3\textwidth}
    \includegraphics[height=1.6in]{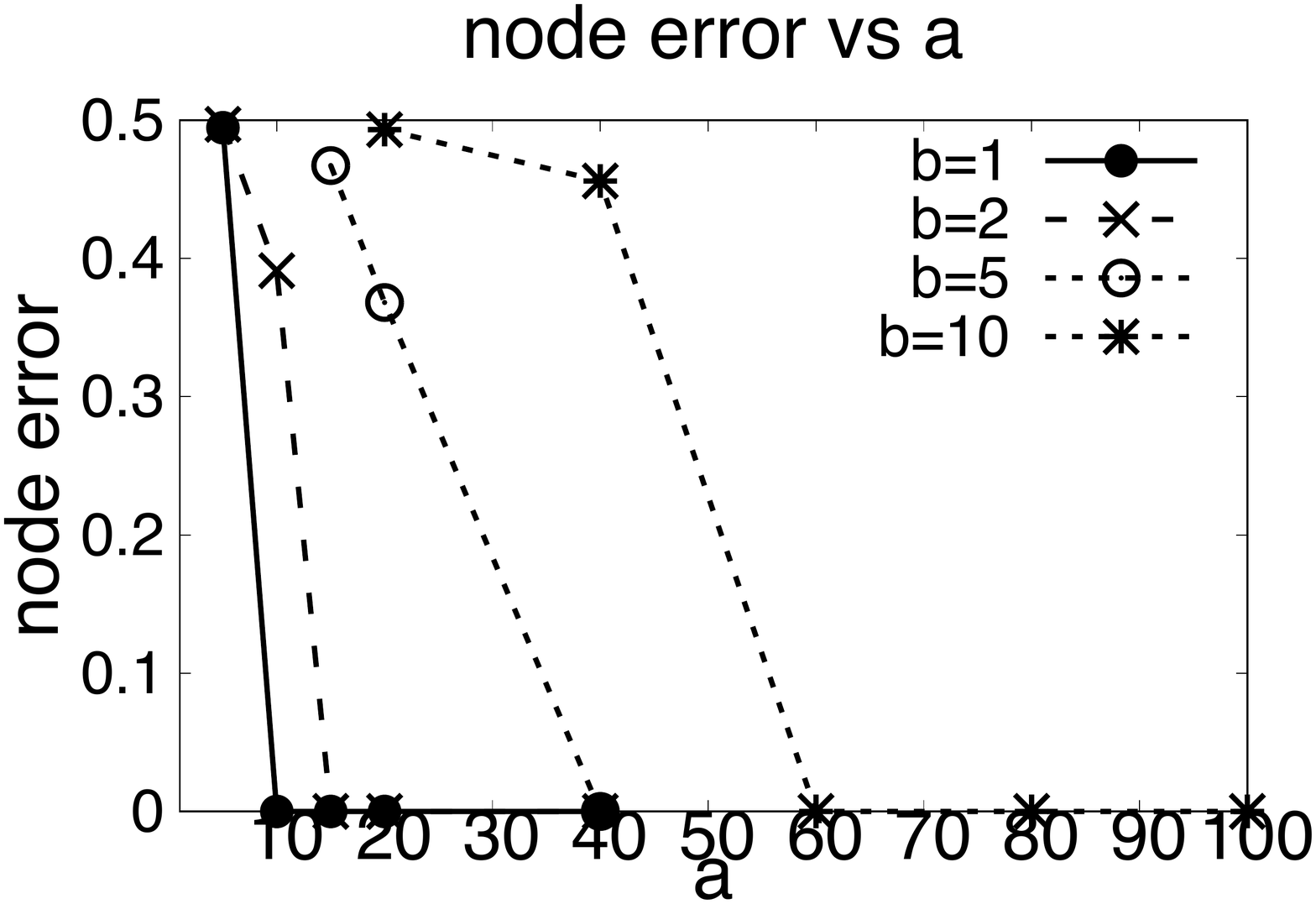}
    \caption{Fraction of nodes misclassified.}
          ~\label{fig:node}
  \end{subfigure}
 \caption{\small Results of the motif-counting algorithm on a synthetic dataset with $5000$ nodes.}
\end{figure*}

\begin{table*}[htbp]
\centering
\small{
 \resizebox{\textwidth}{!}{\begin{tabular}{||c| c| c| c| c| c| c|c|c||} 
 \hline
 Dataset &Total no.  & $T_1$ & $T_2$ & $T_3$ &\multicolumn{2}{c|}{Accuracy}  & \multicolumn{2}{c||}{Running Time (sec)}\\ 
   &of nodes &  &  &  &Motif-Counting & Spectral clustering & Motif-Counting&Spectral clustering\\ [0.5ex] 
\hline\hline
Political Blogs & 1222 & 20 & 2 & 1&  \textbf{0.788}& 0.53 & 1.62 & \textbf{0.29} \\
DBLP & 12138    & 10 & 1 & 2& \textbf{0.675}&0.63 & \textbf{3.93}&18.077\\
LiveJournal & 2366    & 20 & 1 & 1& \textbf{0.7768}&0.64 & \textbf{0.49} & 1.54\\ 
 \hline
\end{tabular}}}
\caption{\small Performance on real world networks}
\label{tab:realtab}
\end{table*}

Instead of relying only on the triangle (or common-neighbor) motif, we can consider other different motifs (as listed in Table \ref{tab:tab1}) and use them to make similar analysis. Aggregating the different motifs by taking a majority vote decision may improve the results experimentally but it is difficult to say anything theoretically since the decisions of the different motifs are not independent. We refer the reader to  Appendix \ref{ap:motifs} for the detailed analysis of incorporating other motifs to obtain analogous theorems. 

\begin{remark}
Instead of using Chernoff bound we could have used better concentration inequality (such as Poisson approximation) in the above analysis, to get tighter condition on the constants. We again preferred to keep things simple.
\end{remark}

\begin{remark}[GBM for $t =3$ and above]
For GBM with $t=3$, to find the number of common neighbors of two vertices, we need to find out the area of intersection of two spherical caps on the sphere. It is possible to do that. It can be seen that, our algorithm will successfully identify the clusters as long as $r_s, r_d \sim \sqrt{\frac{\log n}{n}}$ again when the constant terms satisfy some conditions. However tight characterization becomes increasingly difficult. For general $t$, our algorithm should be successful when $r_s, r_d \sim \Big(\frac{\log n}{n}\Big)^{\frac{1}{t-1}}$, which is also the regime of connectivity threshold. 
\end{remark}
\begin{remark}[More than two clusters] When there are more than two clusters, the same analysis technique is applicable and we can estimate the expected number of common neighbors. This generalization can be  straightforward but tedious. 
\end{remark}

{\bf Motif counting algorithm for SBM.}
While our algorithm is near optimal for GBM in the regime of $r_s, r_d \sim \frac{\log n}{n}$, it is far from optimal for the SBM in the same regime of average degree. Indeed, by using simple Chernoff bounds again, we see that the motif counting algorithm is successful for SBM with inter-cluster edge probability $q$ and intra-cluster probability $p$, when $p,q \sim \sqrt{\frac{\log n}{n}}$. The experimental success of our algorithm in real sparse networks therefore somewhat enforce the fact that GBM is a better model for those network structures than SBM.


\section{Experimental Results}\label{sec:exp}
In addition to validation experiments in Section~\ref{subsec:m1} and~\ref{subsec:m2}, we also conducted an in-depth experimentation of our proposed model and techniques over a set of synthetic and real world networks. Additionally, we compared the efficacy and efficiency of our motif-counting algorithm with the popular spectral clustering algorithm using normalized cuts\footnote{http://scikit-learn.org/stable/modules/clustering.html\#spectral-clustering} and the correlation clustering algorithm~\cite{bbc:04}.

\noindent{\bf Real Datasets.} We use three real datasets described below.

\begin{itemize}[noitemsep,leftmargin=5pt]
\item \textbf{Political Blogs.}~\cite{politicalblog} It contains a list of political blogs from 2004 US Election classified as liberal or conservative, and links between the blogs. The clusters are of roughly the same size with a total of 1200 nodes and 20K edges.

\item \textbf{DBLP.}~\cite{communityreal} The DBLP dataset is a collaboration network where the ground truth communities are defined by the research community. The original graph consists of roughly 0.3 million nodes. We process it to extract the top two communities of size $\sim$ 4500 and 7500 respectively. This is given as input to our algorithm.

\item \textbf{LiveJournal.}~\cite{livejournal} The LiveJournal dataset is a free online blogging social network of around 4 million users. Similar to DBLP, we extract the top two clusters of sizes 930 and 1400 which consist of around 11.5K edges.
\end{itemize}
  We have not used the academic collaboration (Section~\ref{subsec:m1}) dataset here because it is quite sparse and below the connectivity threshold regime of both GBM and SBM.

\noindent{\bf Synthetic Datasets.} We generate synthetic datasets of different sizes according to the GBM with $t=2, k=2$ and for a wide spectrum of  values of $r_s$ and $r_d$, specifically we focus on the sparse region where $r_s = \frac{a\log{n}}{n}$ and $r_d = \frac{b\log{n}}{n}$ with variable values of $a$ and $b$.

\noindent{\bf Experimental Setting.} For real networks,  it is difficult to calculate an exact threshold as the exact values of $r_s$ and $r_d$ are not known. Hence, we follow a three step approach. Using a somewhat large threshold $T_1$ we sample a subgraph $S$ such that $u,v$ will be in $S$ if there is an edge between $u$ and $v$, and they have at least $T_1$ common neighbors. We now attempt to recover the subclusters inside this subgraph by following our algorithm with a small threshold $T_2$. Finally, for nodes that are not part of $S$, say $x \in V \setminus S$, we select each $u \in S$ that  $x$ has an edge with and use a threshold of $T_3$ to decide if $u$ and $x$ should be in the same cluster. The final decision is made by taking a majority vote. We can employ sophisticated methods over this algorithm to improve the results further, which is beyond the scope of this work. 

We use the popular f-score metric which is the harmonic mean of precision (fraction of number of pairs correctly classified to total number of pairs classified into clusters) and recall (fraction of number of pairs correctly classified to the total number of pairs in the same cluster for ground truth), as well as the node error rate for performance evaluation. A node is said to be misclassified if it belongs to a cluster where the majority comes from a different ground truth cluster (breaking ties arbitrarily). Following this, we use the above described metrics to compare the performance of different techniques on various datasets.

\noindent{\bf Results.} We  compared our algorithm with the spectral clustering algorithm where we extracted two eigenvectors in order to extract two communities. Table~\ref{tab:realtab} shows that our algorithm gives an accuracy as high as $78\%$. The spectral clustering  performed worse  compared to our algorithm for all real world datasets. It obtained the worst accuracy of 53\% on political blogs dataset. The correlation clustering algorithm generates various small sized clusters leading to a very low recall, performing much worse than the motif-counting algorithm for the whole spectrum of parameter values.

We can observe in Table~\ref{tab:realtab} that our algorithm is much faster than the spectral clustering algorithm for larger datasets (LiveJournal and DBLP). This  confirms that motif-counting algorithm is more scalable than the spectral clustering algorithm.  The spectral clustering algorithm also works very well on synthetically generated SBM networks even in the sparse regime \cite{lei2015consistency,rohe2011spectral}. The superior performance of the simple motif clustering algorithm over the real networks provide a further validation of GBM over SBM. Correlation clustering  takes 8-10 times longer as compared to motif-counting algorithm for the various range of its parameters.  We also compared our algorithm with the Newman algorithm \cite{newman} that performs really well for the LiveJournal dataset (98\% accuracy). But it is extremely slow and performs much worse on other datasets. This is because the LiveJournal dataset has two well defined subsets of vertices with very few intercluster edges. The reason for the worse performance of our algorithm is the sparseness of the graph. If we create a subgraph by removing all nodes of degrees $1$ and $2$, we get $100\%$ accuracy with our algorithm. Finally, our algorithm is easily parallelizable to achieve better improvements. This clearly establishes the efficiency and effectiveness of motif-counting.

\begin{figure}[t]
  \centering
   \begin{subfigure}[c]{0.5 \textwidth}
   \includegraphics[width=\textwidth]{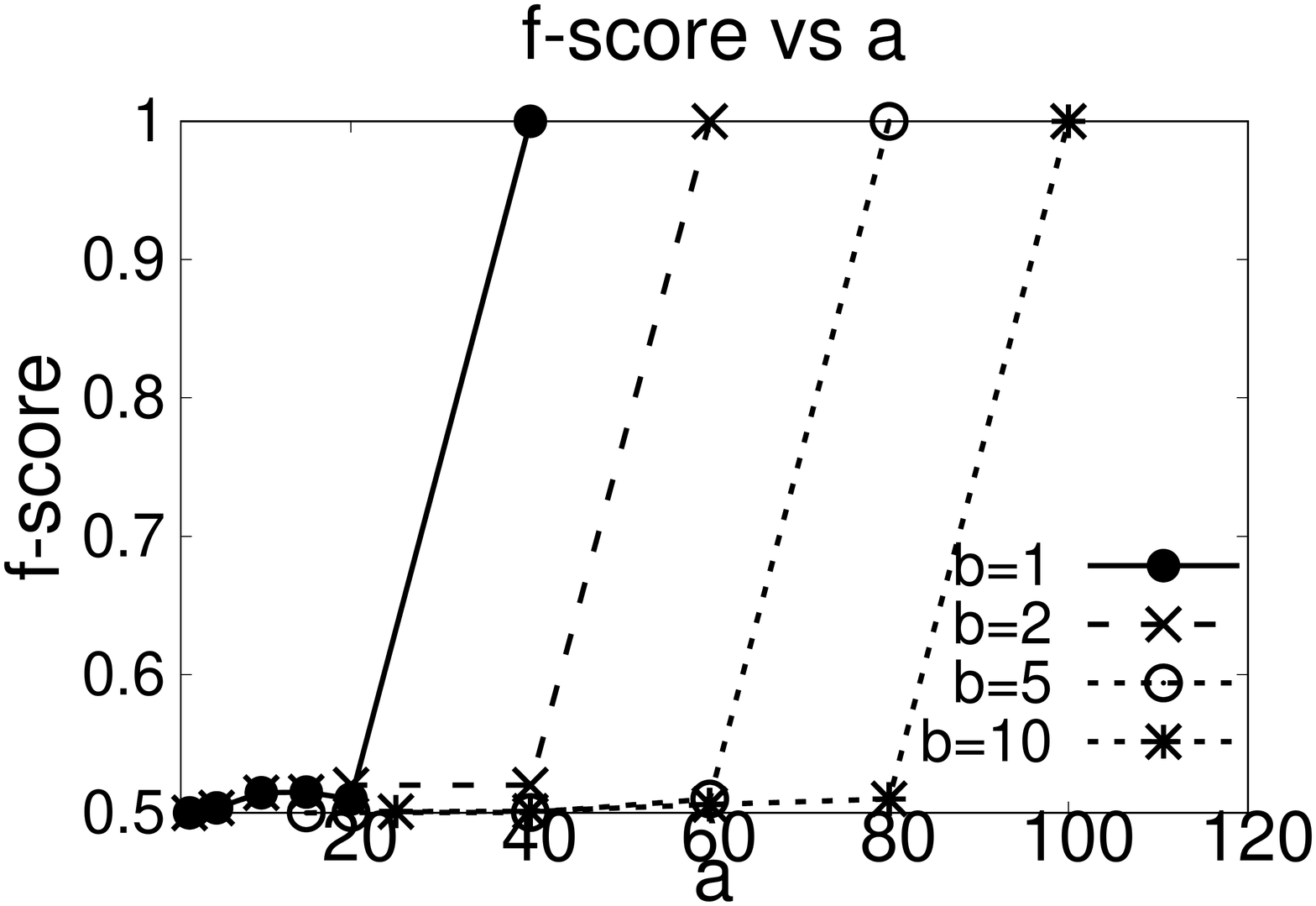}
    \caption{f-score with varying $a$, fixed $b$.}
      ~\label{fig:varya1}
 \end{subfigure}%
 \begin{subfigure}[c]{0.5\textwidth}
    \includegraphics[width=\textwidth]{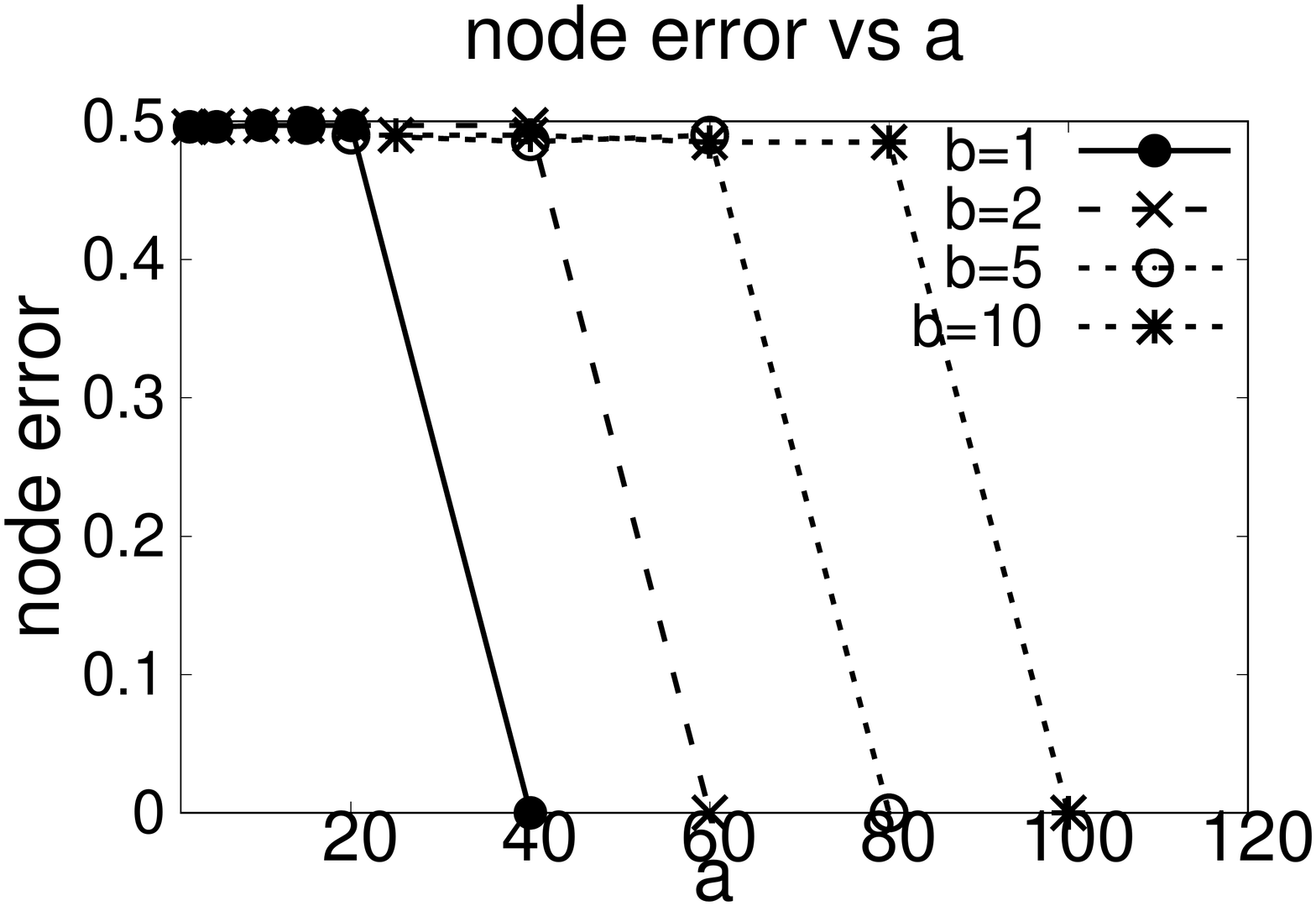}
     \caption{Fraction of  nodes misclassified.}
     ~\label{fig:fscore-1}
  \end{subfigure}
   \vspace{-0.1in}
 \caption{\small Results of the spectral clustering on a synthetic dataset with $5000$ nodes.}
\end{figure}

We observe similar gains on synthetic datasets. 
Figures~\ref{fig:theorypractice}, \ref{fig:varya} and \ref{fig:node} report results on the synthetic datasets with $5000$ nodes. Figure~\ref{fig:theorypractice} plots the minimum gap between $a$ and $b$ that guarantees exact recovery according to Theorem~\ref{th:main}  vs minimum value of $a$ for varying $b$ for which experimentally (with only triangle motif) we were able to recover the clusters exactly. Empirically, our results demonstrate \correct{much superior performance of our algorithm. The empirical results are much better than the theoretical bounds because the concentration inequalities applied in Theorem~\ref{th:main} assume the worst value of the distance between the pair of vertices that are under consideration.}  We also see a clear threshold behavior on both f-score and node error rate in Figures~\ref{fig:varya} and \ref{fig:node}. We have also performed spectral clustering on this 5000-node synthetic dataset (Figures~\ref{fig:varya1} and \ref{fig:fscore-1}). Compared to the plots of figures \ref{fig:varya} and \ref{fig:node}, they show suboptimal performance, indicating the relative ineffectiveness of spectral clustering in GBM compared to the motif counting algorithm.

\remove{Fig \ref{fig:varya} provides the comparison of f-score of the resolution task for varying values of $a$ on fixing the value of $b$ for a 5000 node network using the common neighbor motif. It is evident that as the gap between $a$ and $b$ increases, the f-score increases to $1$.  Similarly Fig \ref{fig:node} compares the fraction of nodes that are incorrectly classified by our technique. Both these plots provide a conclusion that our technique provides total recovery of the clusters as the gap between $a$ and $b$ increases even for sparse graphs.  We repeated this process for various sizes of networks and found a similar behaviour of f-score and node error. This clearly confirms the superior performance of our technique for this regime. Table \ref{tab:minab} shows the minimum value of $a$ for different values of $b$  for a 2000 node graph to obtain total recovery. It can be seen from \ref{thm:mainthm} that these variations of $a$ and $b$ are independent of $n$, number of nodes in the graph and we observe the same from our experimental analysis.

\remove{\begin{table}
\caption{Minimum experimental $a$ for $b$ }
\begin{center}
 \begin{tabular}{||c|c|c|c|c|c|c|c|c|c|c|c|} 
 \hline
b& 1&2&3&4&5&6&7&8&9&10&11\\\hline
$ a_{min}$&20&24&28&34&34&48&54&48&62&64&66\\\hline
\end{tabular}
\end{center}
\label{tab:minab}
\end{table}}

We analyzed the emperically obtained values of $a$ and $b$ with the theoretically calculated values. It is evident from the Fig ~\ref{fig:theorypractice} that our technique confirms the theoretical bounds. Infact sometimes it performs better emperically as our technique provides total recovery even when the gap between $a$ and $b$ is less than the theoretically observed gap.

}

\remove{

\section{Conclusion}

In this work, we have shown that the geometric block model has characteristics which are significantly different from the SBM. Our algorithms are optimal within a constant factor--however, characterizing the sharp threshold phenomena in GBM remains an intriguing direction of future study.}
\remove{\section{Conclusion}
The geometric block model discussed above is a stringent model which can be further generalized to the following scenarios. One of the obvious extensions which is directly motivated from the stochastic block model is that each edge has a probability along with the notion of distance. In this hybrid model, two intra cluster nodes are connected with a probability $p$ only if they are within a distance of $r_s$ from each other. Similarly, two inter cluster nodes are connected with a probability $q$ only if they are within a distance of $r_d$ of each other.}


\bibliographystyle{abbrv}

\bibliography{bibfile}

\section{Appendix}

For the analysis, let $X_w\in [0,1]$ be the uniform random variable associated with $w \in V$. Also recall that $\dist(X,Y) \equiv \min\{|X - Y|, 1- |X- Y|\}, X,Y \in \reals$. 

\subsection{Proof of Lemma \ref{lem:sep2}}\label{app:lem3}

\begin{proof}[Proof of Lemma \ref{lem:sep2}]
Here $u,v$ are from different clusters. For any vertex $z \in V$, let $\cE^{u,v}_z(x) \equiv \{(u,z), (v,z) \in E\mid (u,v)\in E, \dist(X_u,X_v)=x\}$ be the event that $z$ is a common neighbor.
For $z\in V\setminus\{u,v\}$,
\begin{align*}
\Pr(\cE^{u,v}_z(x)) &= \Pr( (z,u) \in E, (z,v) \in E \mid (u,v) \in E, \dist(X_u,X_v)=x)\\
& = \min\{2r_d, r_d+r_s-x\} \\
& = \begin{cases}2r_d & \text{ if } 2r_d < r_s\\ r_s +r_d - x & \text{ otherwise} \end{cases} .
\end{align*}
Now since there are $n-2$ points in $V \setminus \{u,v\}$, we have the statement of the lemma. 
\end{proof}

\subsection{Results for other motifs} \label{ap:motifs}

Next, we describe two lemmas for a GBM graph $G(V,E)$ with two unknown clusters $V=V_1\sqcup V_2$, and parameters $r_s$,$r_d$ on considering  other motifs than triangles (Motif 1). These results are used to populate Table \ref{tab:tab1}. When we run Algorithm \ref{alg:alg1} with other motifs, the subroutine \texttt{process} uses the corresponding motifs to compute the variable `count'. Other than this the algorithm remains same.

\subsubsection{Motif 2 amd Motif 3}

\begin{lemma}
For any two vertices $u,v \in V_i: (u,v) \in E, i =1,2$ belonging to the same cluster and $\dist(X_u,X_v)=x$, the count of number of nodes forming Motif 2 (see Table \ref{tab:tab1}) with $u$ and $v$ (i.e., neighbors  of $u$ and non neigbors of $v$), $|\{z\in V: (z,u)\in E, (z,v) \notin E\}|$ is a random variable distributed according to ${\rm Bin}(\frac{n}{2}-2,x)+{\rm Bin}(\frac{n}{2},\min(2r_d,x))$, where ${\rm Bin}(n,p)$ is a binomial random variable with mean $np$.\label{lem:sep3}
\end{lemma}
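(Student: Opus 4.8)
The plan is to mirror the proof of Lemma~\ref{lem:sep} almost verbatim, changing only the geometric region into which a candidate vertex must fall. Assume without loss of generality $u,v \in V_1$, and recall that conditioning on $(u,v)\in E$ with $u,v$ in the same cluster forces $\dist(X_u,X_v)=x \le r_s$. For a single candidate vertex $z$, let $\cE^{u,v}_z(x)$ now denote the Motif-2 event $\{(z,u)\in E,\ (z,v)\notin E \mid (u,v)\in E,\ \dist(X_u,X_v)=x\}$. Exactly as in Lemma~\ref{lem:indep}, once we condition on $x$ these events are independent across distinct $z$ and depend only on which cluster $z$ belongs to; hence the total Motif-2 count splits as a sum of two independent binomials, one over the $\frac n2-2$ vertices of $V_1\setminus\{u,v\}$ and one over the $\frac n2$ vertices of $V_2$. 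It remains to compute $\Pr(\cE^{u,v}_z(x))$ in each of the two cases.

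The core is a one-dimensional arc computation on the circle $[0,1]$ with metric $\dist$. For $z\in V_1$, the edge $(z,u)$ requires $X_z$ to lie in the arc $A_u$ of radius $r_s$ about $X_u$, while the non-edge $(z,v)$ requires $X_z\notin A_v$, the arc of radius $r_s$ about $X_v$. Since $x\le r_s<2r_s$ the two arcs overlap in an arc of length $2r_s-x$, so the admissible region $A_u\setminus A_v$ has length $2r_s-(2r_s-x)=x$; thus $\Pr(\cE^{u,v}_z(x))=x$, giving the ${\rm Bin}(\frac n2-2,x)$ term. For $z\in V_2$ the relevant radius is $r_d$ rather than $r_s$: we need $X_z$ in the radius-$r_d$ arc about $X_u$ but outside the radius-$r_d$ arc about $X_v$.

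The main (minor) subtlety is this cross-cluster case, which produces the $\min(2r_d,x)$ factor and must be split into two regimes. When $x\le 2r_d$ the two radius-$r_d$ arcs overlap in an arc of length $2r_d-x$, so the admissible region has length $2r_d-(2r_d-x)=x$. When $x>2r_d$ the arcs are disjoint---by the triangle inequality any $z$ within $r_d$ of $u$ is automatically at distance $>r_d$ from $v$---so the entire radius-$r_d$ arc about $u$, of length $2r_d$, is admissible. Combining the regimes, $\Pr(\cE^{u,v}_z(x))=\min(x,2r_d)$, which yields the ${\rm Bin}(\frac n2,\min(2r_d,x))$ term and completes the argument. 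I expect no genuine obstacle here; the only points requiring care are the bookkeeping of the two regimes of $x$ relative to $2r_d$ and remembering the a priori bound $x\le r_s$ coming from the conditioning on $(u,v)\in E$.
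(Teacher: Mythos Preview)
Your proposal is correct and follows essentially the same approach as the paper: assume $u,v\in V_1$, compute $\Pr(\cE^{u,v}_z(x))$ separately for $z\in V_1$ (arcs of radius $r_s$, giving probability $x$) and for $z\in V_2$ (arcs of radius $r_d$, giving probability $\min(x,2r_d)$ via the two regimes $x\le 2r_d$ and $x>2r_d$), then use independence across $z$ to obtain the two binomials. Your write-up is in fact more explicit about the arc geometry than the paper's one-line computations, but the argument is the same.
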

\begin{proof}
Without loss of generality, assume $u,v \in V_1$. For any vertex $z \in V$, let $\cE^{u,v}_z(x) = \{(u,z)\in E, (v,z) \notin E\mid (u,v)\in E, \dist(X_u,X_v)=x\}$ be the event that $z$ is a neighbor of $u$ and non neighbor of $v$.
For $z\in V_1$,
\begin{align*}
\Pr(\cE^{u,v}_z(x)) &= \Pr( (z,u) \in E, (z,v) \notin E \mid (u,v) \in E, \dist(X_u,X_v)=x)= r_s-(r_s-x) =x.
\end{align*}
For $z \in V_2$, we have, 
\begin{align*}
\Pr(\cE^{u,v}_z(x)) &= \Pr( (z,u) \in E, (z,v) \notin E \mid (u,v) \in E,  \dist(X_u,X_v)=x)\\
&= \begin{cases}2r_d & \text{ if } x>2r_d\\ x & \text{ otherwise} \end{cases} .
\end{align*}
Now since there are $\frac{n}2-2$ points in $V_1 \setminus \{u,v\}$ and $\frac{n}2$ points in $V_2$, we have the statement of the lemma. 
\end{proof}

\begin{lemma}
For any two vertices $u\in V_1,v \in V_2: (u,v) \in E$ belonging to different clusters and $\dist(X_u,X_v)=x$, the count of number of nodes forming Motif 2 (see Table \ref{tab:tab1}) with $u$ and $v$ (i.e. neighbor of $u$ and non neighbor of $v$),  $|\{z\in V: (z,u)\in E, (z,v) \notin E\}|$ is a random variable distributed according to ${\rm Bin}(\frac{n}{2}-1,2(r_s - r_{d}))$, assuming $r_s>2r_d$.\label{lem:sep4}
\end{lemma}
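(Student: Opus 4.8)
The plan is to mirror the proofs of Lemmas \ref{lem:sep2} and \ref{lem:sep3}: fix the distance $\dist(X_u,X_v)=x$, compute for a single vertex $z$ the probability of the Motif-2 event $\cE^{u,v}_z(x)=\{(u,z)\in E,\ (v,z)\notin E\}$ separately according to whether $z\in V_1$ or $z\in V_2$, and then assemble the count. Since $(u,v)\in E$ with $u,v$ in different clusters, we have $x\le r_d$. Conditioned on $x$, the variables $X_z$ for the remaining $n-2$ vertices are independent and uniform on $[0,1]$, so the Motif-2 indicators are conditionally independent (as in Lemma \ref{lem:indep}); the count is therefore a sum of binomials, one per cluster, and it remains only to identify the two success probabilities.

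First I would treat $z\in V_1\setminus\{u\}$. Placing $X_u$ at $0$ on the circle and $X_v$ at $x$, being a neighbor of $u$ (same cluster) means $X_z$ lies in the length-$2r_s$ arc $[-r_s,r_s]$, while being a non-neighbor of $v$ (different cluster) means $X_z$ lies outside the length-$2r_d$ arc $[x-r_d,x+r_d]$. Because $x\le r_d$ and $r_s>2r_d$, one checks $x+r_d\le 2r_d<r_s$ and $x-r_d\ge -r_d>-r_s$, so the forbidden arc around $v$ sits entirely inside the admissible arc around $u$. Hence the event region is an arc of length $2r_s-2r_d$, giving $\Pr(\cE^{u,v}_z(x))=2(r_s-r_d)$.

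Next I would treat $z\in V_2\setminus\{v\}$, where the roles of the radii swap: being a neighbor of $u$ (different cluster) means $X_z\in[-r_d,r_d]$, and being a non-neighbor of $v$ (same cluster) means $X_z\notin[x-r_s,x+r_s]$. The key point is that $r_s>2r_d$ (with $x\le r_d$) forces $x-r_s<-r_d$ and $x+r_s>r_d$, so the length-$2r_s$ arc around $v$ swallows the entire length-$2r_d$ arc around $u$; the event region is empty and $\Pr(\cE^{u,v}_z(x))=0$. Thus the $\frac{n}{2}-1$ vertices of $V_2$ contribute only a degenerate binomial.

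Combining the two cases, the count is ${\rm Bin}(\frac{n}{2}-1,\,2(r_s-r_d))+{\rm Bin}(\frac{n}{2}-1,0)={\rm Bin}(\frac{n}{2}-1,\,2(r_s-r_d))$, as claimed. The only delicate part is the cyclic case analysis on the circle: one must verify the two arc-containment claims carefully (keeping the radii $\le \tfrac12$ so each ball has measure exactly twice its radius), and it is precisely the hypothesis $r_s>2r_d$ that makes the $V_2$ contribution vanish. Without it the $V_2$ term would survive as a second nondegenerate binomial, which is what produces the richer expression in the $r_s\le 2r_d$ column of Table \ref{tab:tab1}.
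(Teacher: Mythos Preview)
Your proposal is correct and follows essentially the same approach as the paper: split according to whether $z\in V_1\setminus\{u\}$ or $z\in V_2\setminus\{v\}$, compute the Motif-2 probability in each case, and observe that the $V_2$ contribution vanishes under $r_s>2r_d$. You supply the explicit arc-containment verifications that the paper leaves implicit, but the structure and key observations are identical.
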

\begin{proof}

For any vertex $z \in V_{1}$, let $\cE^{u,v}_z(x) = \{(u,z)\in E, (v,z) \notin E \mid (u,v)\in E, \dist(X_u,X_v)=x\}$ be the event that $z$ is a  neighbor of $u$ and a non neighbor of $v$.
For $z\in V_{1}\setminus\{u\}$
\begin{align*}
\Pr(\cE^{u,v}_z(x)) &= \Pr( (z,u) \in E, (z,v) \notin E \mid (u,v) \in E, \dist(X_u,X_v)=x)\\
& = 2(r_s-r_d).
\end{align*}
Now for $z\in V_{2}\setminus\{v\}$, there cannot be an edge with $u$ and no edge with $v$ because $r_s>2r_d$.
Since there are $\frac{n}{2}-1$ points in $V_{1} \setminus \{u\}$, we have the statement of the lemma.
\end{proof}

\begin{theorem}[Motif 2 or 3]\label{th:main2}
If $r_s = \frac{a\log{n}}{n}$ and $r_d = \frac{b \log{n}}{n}$,  $a > 4b$, Algorithm \ref{alg:alg1} with 
$E_S = \left(b + \frac{a}{2} + \sqrt{3b}+ \sqrt{\frac{3a}{2}}\right)\frac{\log n}{n}$ and $E_D =\left((a-b) - \sqrt{2(a-b)}\right)\frac{\log n}{n}$, where the \texttt{process}
subroutine counts Motif 2,
can recover the clusters $V_1, V_2$ accurately  with a probability of at least $1 -o(1)$ if 
$$(a-b) - \sqrt{2(a-b)}  > b + \frac{a}{2} + \sqrt{3b}+ \sqrt{\frac{3a}{2}}. $$
\end{theorem}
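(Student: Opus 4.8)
The plan is to follow the template of the proof of Theorem~\ref{th:main} almost verbatim, with the essential twist that for Motif~2 the separation runs in the opposite direction: by Lemmas~\ref{lem:sep3} and~\ref{lem:sep4}, same-cluster pairs produce a \emph{smaller} count than different-cluster pairs. First I would record the conditional means. Writing $\dist(X_u,X_v)=x$ and using that an edge forces $x\le r_s$, Lemma~\ref{lem:sep3} gives, for $u\sim v$, a mean $\mu_s(x)=(\tfrac{n}{2}-2)x+\tfrac{n}{2}\min(x,2r_d)$, while Lemma~\ref{lem:sep4} gives, for $u\nsim v$, the constant mean $\mu_d=(\tfrac n2-1)\cdot 2(r_s-r_d)$ (here the hypothesis $a>4b$ guarantees $r_s>2r_d$, so Lemma~\ref{lem:sep4} applies). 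The key structural observation is that $\mu_s(x)$ is nondecreasing in $x$, so its largest value is attained at the extreme distance $x=r_s$, where $\mu_s(r_s)=(\tfrac n2-2)r_s+n r_d\approx(\tfrac a2+b)\log n$, and one checks that $a>4b$ makes this worst-case same-cluster mean strictly smaller than $\mu_d\approx (a-b)\log n$.

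The next step is to convert this mean separation into a separation of the actual counts via Chernoff bounds, exactly as in Theorem~\ref{th:main}. For the same-cluster count I would upper-bound $Z$: since $Z$ is a sum of the two independent binomials ${\rm Bin}(\tfrac n2-2,x)$ and ${\rm Bin}(\tfrac n2,\min(x,2r_d))$, I apply the upper Chernoff bound to each summand with failure probability $\tfrac{1}{n\log n}$ and add, obtaining with probability $1-\tfrac{2}{n\log n}$ a bound $Z\le\mu_s(x)+\sqrt{3(\log n+\log\log n)(\tfrac n2-2)x}+\sqrt{3(\log n+\log\log n)\tfrac n2\min(x,2r_d)}$. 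Because every term on the right is nondecreasing in $x$, this upper envelope is maximized at $x=r_s$; substituting $r_s=\tfrac{a\log n}{n}$, $r_d=\tfrac{b\log n}{n}$ collapses it to $\big(\tfrac a2+b+\sqrt{\tfrac{3a}{2}}+\sqrt{3b}\big)\log n=E_S\,n$. Symmetrically, for the different-cluster count I apply the lower Chernoff bound to the single binomial ${\rm Bin}(\tfrac n2-1,2(r_s-r_d))$, getting $Z\ge\mu_d-\sqrt{2(\log n+\log\log n)\mu_d}=\big((a-b)-\sqrt{2(a-b)}\big)\log n=E_D\,n$ with probability $1-\tfrac1{n\log n}$.

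Finally I would close the argument. Since $E_S<E_D$ under the stated hypothesis $(a-b)-\sqrt{2(a-b)}>b+\tfrac a2+\sqrt{3b}+\sqrt{\tfrac{3a}{2}}$, every same-cluster count lies (whp) below $E_S\,n$ and every different-cluster count above $E_D\,n$, so the midpoint decision implicit in \texttt{process} --- returning \textbf{same} when $\mathrm{count}/n$ is closer to $E_S$ than to $E_D$ --- classifies each queried edge correctly; counts at small $x$ only move further toward $E_S$ and cause no trouble. A union bound over the $n-1$ pairs that Algorithm~\ref{alg:alg1} queries, each failing with probability $O(\tfrac{1}{n\log n})$, yields total error $O(\tfrac1{\log n})=o(1)$. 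The only genuine subtlety, and the step I would be most careful about, is the monotonicity-in-$x$ argument that certifies $x=r_s$ as the worst case for the same-cluster side: unlike the constant different-cluster mean, the same-cluster count depends on the unobserved distance $x$, and the whole thresholding scheme works precisely because its high-probability upper envelope is monotone in $x$ and still falls below $E_D$ at the extreme $x=r_s$.
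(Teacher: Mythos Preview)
Your proof is correct and follows essentially the same approach as the paper: compute the conditional means from Lemmas~\ref{lem:sep3} and~\ref{lem:sep4}, apply separate Chernoff bounds to the two binomial summands in the same-cluster case (upper tail) and to the single binomial in the different-cluster case (lower tail), maximize the same-cluster upper envelope over $x$ at $x=r_s$ by monotonicity, and finish with a union bound over the $n-1$ queried edges. The only cosmetic difference is that you keep $\mu_s(x)$ in the unified form $(\tfrac n2-2)x+\tfrac n2\min(x,2r_d)$ rather than splitting into cases, which is arguably cleaner.
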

\begin{proof}
\correct{We need to consider the case of $r_s\geq 2r_d$ from Lemma \ref{lem:sep3} and Lemma \ref{lem:sep4}.}
For $u,v \in V: (u,v) \in E$, let $Z$ denote the random variable that equals the number of nodes that are neighbors of $u$ and not-a-neighbor of $v$. Let us also denote $\mu_s= \avg(Z|u \sim v, \correct{\dist(X_u,X_v)=x)}$ and $\mu_d = \avg(Z|u \nsim v, \correct{\dist(X_u,X_v)=x)}$, where $u\sim v$ means $u$ and $v$ are in the same cluster. 
We can easily find $\mu_s$ and $\mu_d$ from Lemmas \ref{lem:sep3}, \ref{lem:sep4}. We see that, 
$$\mu_s  =  \begin{cases}
  (n-2)x, \text{ if $x\leq 2r_d$}\\
( \frac{n}{2}-2)x + nr_d , \text{ if $2r_d<x\leq r_s$.}
 \end{cases}
 \text{and } \quad \mu_d = (n-2)(r_s-r_d).$$
%
%

The value of $\mu_s$ is less than that of $\mu_d$ for all values of $x$ when $r_s\geq 4r_d$. We try to bound the values of $Z$ in the two cases possible and then achieve the condition of correct resolution.
Given a fixed $d_L(X_u,X_v)$, since $Z$ is a sum of independent binary random variables, using the Chernoff bound, $\Pr(Z < (1-\delta)\avg(Z)) \leq e^{-\delta^2\avg(Z)/2} = \frac{1}{n\log n},$ when $\delta = \sqrt{\frac{2(\log{n}+\log\log n)}{\avg(Z)}}$. 
Now with probability at least $1-\frac1{n\log n}$,  Z is atleast $ F^1_{\nsim}(x) \equiv \mu_d - \sqrt{2(\log n+\log\log n)\mu_d}$ when $u$ and $v$ belong to different clusters.

Using Chernoff bound, we also know that $\Pr(Z > (1+\delta) \avg(Z)) \le e^{-\delta^2\avg(Z)/3}  = \frac{1}{n\log n},$ when $\delta = \sqrt{\frac{3(\log{n}+\log\log n)}{\avg(Z)}}$.
Hence, with probability at least  $1-\frac2{n\log n}$,

$$ Z \le  F^1_{\sim}(x) \equiv \begin{cases}
  (n-2)x +  \sqrt{3(\log n+\log\log n)\frac{n}{2}x} + \sqrt{3(\log n+\log\log n)(\frac{n}{2}-2)x}  , \text{ if $x\leq 2r_d$}\\
  nr_d + (\frac{n}{2}-2)x +  \sqrt{3(\log n+\log\log n)nr_d} & \\ \hspace{2in} + \sqrt{3(\log n+\log\log n)(\frac{n}{2}-2)x}  , \text{ if $2r_d<x\leq r_s$.}
 \end{cases}
$$
 when $u,v$ belong to the same cluster and $\dist (X_u,X_v)=x$. 
 
We calculate the maximum value of $F^1_{\sim}(x)$ over all values of $x$ to find the value closest to $F^1_{\nsim}(x)$. $F^1_{\sim}(x)$ is an increasing function $\forall x\leq r_s$ with maximum value of $nr_d+(\frac{n}{2}-2)r_s +  \sqrt{3(\log n+\log\log n)nr_d} + \sqrt{3(\log n+\log\log n)(\frac{n}{2}-2)r_s}$   at $x=r_s$. Therefore the algorithm will be successful to label correctly  with probability $1- \frac{3}{n\log n}$ as long as,
$$
\left((a-b) - \sqrt{2(a-b)}\right) \log{n} \ge \left(b + \frac{a}{2} + \sqrt{3b}+ \sqrt{\frac{3a}{2}}\right) \log{n}.
$$ 
The rest of the argument follows similar to Theorem \ref{th:main}.
\end{proof}

\remove{
\latest{
Following the proof of Theorem \ref{th:main2}, we can calculate the values of $E_S$ and $E_D$ for the second type of motif as defined by the following lemma.
\begin{lemma}\label{lem:esed2}
If $r_s = \frac{a\log{n}}{n}$ and $r_d = \frac{b \log{n}}{n}$,  $a > 4b$, Algorithm \ref{alg:alg1} can recover the clusters $V_1, V_2$ accurately using the common neighbor motif with a probability of $1 -o(1)$ if 
$\left((a-b) - \sqrt{2(a-b)}\right)  \ge \left(b + \frac{a}{2} + \sqrt{3b}+ \sqrt{\frac{3a}{2}}\right)$, with $E_D=r_s-r_d$ and 
$E_S=\frac{r_s}{2}+r_d$
\end{lemma}
}}

\subsubsection{Motif 4}

In this part we are concerned with the motif where for $(u,v)\in E$, we seek nodes that are neighbors of neither $u$ nor $v$.

\begin{lemma}
For any two vertices $u,v \in V_i: (u,v) \in E, i =1,2$ belonging to the same cluster and $\dist(X_u,X_v)=x$, the count of nodes that form Motif 4 with $u,v$ (i.e.,  non-neighbors  of both $u$ and $v$), $|\{z\in V: (z,u)\notin E, (z,v) \notin E\}|$ is a random variable distributed according to ${\rm Bin}(\frac{n}{2}-2,1-(x+2r_s))+\mathbb{1}\{x \le 2r_d\}{\rm Bin}(\frac{n}{2},1-(x+2r_d))+\mathbb{1}\{x > 2r_d\}{\rm Bin}(\frac{n}{2},1-4r_d) $, when $r_s>2r_d$. 
\label{lem:sep5}
\end{lemma}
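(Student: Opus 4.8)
The plan is to follow the same template as the proofs of Lemmas \ref{lem:sep} and \ref{lem:sep3}: fix the distance $\dist(X_u,X_v)=x$, compute the single-vertex probability that a given $z$ contributes to the motif, separately for $z$ in the same cluster as $u,v$ and for $z$ in the opposite cluster, and then assemble the total count as a sum of independent binomials. Without loss of generality take $u,v\in V_1$, and note that conditioning on $(u,v)\in E$ forces $x\le r_s$. For any $z$, let $\cE^{u,v}_z(x)\equiv\{(u,z)\notin E,(v,z)\notin E\mid (u,v)\in E,\dist(X_u,X_v)=x\}$ be the event that $z$ is a non-neighbor of both $u$ and $v$. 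As in Lemma \ref{lem:indep}, once $x$ is fixed the positions $X_z$ are mutually independent across $z\in V\setminus\{u,v\}$, so the events $\cE^{u,v}_z(x)$ are independent and the count is a sum of Bernoulli variables; it then remains only to evaluate $\Pr(\cE^{u,v}_z(x))$ in each of the two cluster cases.

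The core of the argument is a measure-of-union computation on the circle $[0,1]$ equipped with $\dist$. For $z\in V_1$, the vertex $z$ is adjacent to $u$ (resp.\ $v$) exactly when $X_z$ lies in the arc of length $2r_s$ centered at $X_u$ (resp.\ $X_v$); since the centers are at distance $x\le r_s\le 2r_s$ these two arcs overlap, so their union has measure $x+2r_s$ and hence $\Pr(\cE^{u,v}_z(x))=1-(x+2r_s)$. For $z\in V_2$ the relevant arcs have length $2r_d$, and here the case split enters: if $x\le 2r_d$ the arcs still overlap and the union has measure $x+2r_d$, giving $\Pr(\cE^{u,v}_z(x))=1-(x+2r_d)$; if instead $x>2r_d$ (a range that is nonempty precisely because $r_s>2r_d$) the two arcs are disjoint, the union has measure $4r_d$, and $\Pr(\cE^{u,v}_z(x))=1-4r_d$. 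Throughout, the sparse regime $r_s,r_d\sim\frac{\log n}{n}$ guarantees $x+2r_s<1$ and $4r_d<1$, so these arc-length expressions never wrap around and the probabilities remain in $[0,1]$.

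To finish, I would combine the single-vertex probabilities with the vertex counts: there are $\frac{n}{2}-2$ candidate vertices in $V_1\setminus\{u,v\}$ and $\frac{n}{2}$ in $V_2$. By the conditional independence above, the total Motif-4 count is the independent sum of a ${\rm Bin}(\frac{n}{2}-2,1-(x+2r_s))$ term arising from $V_1$ and, from $V_2$, either a ${\rm Bin}(\frac{n}{2},1-(x+2r_d))$ term (when $x\le 2r_d$) or a ${\rm Bin}(\frac{n}{2},1-4r_d)$ term (when $x>2r_d$). This is exactly the claimed distribution $\mathbb{1}\{x\le 2r_d\}{\rm Bin}(\frac{n}{2},1-(x+2r_d))+\mathbb{1}\{x>2r_d\}{\rm Bin}(\frac{n}{2},1-4r_d)$ added to the $V_1$ term.

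The main obstacle—really the only place requiring care—is the union-of-arcs measure and its dependence on whether the two radius-$r_d$ arcs overlap; getting the disjoint case right is what produces the indicator split and is precisely where the hypothesis $r_s>2r_d$ is used (it is what makes the regime $2r_d<x\le r_s$ nonempty). Everything else is a direct transcription of the earlier proofs.
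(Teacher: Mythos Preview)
Your proposal is correct and follows essentially the same approach as the paper: fix $u,v\in V_1$, compute $\Pr(\cE^{u,v}_z(x))$ separately for $z\in V_1$ and $z\in V_2$ via the complement of the union of two arcs, and then use the vertex counts $\frac{n}{2}-2$ and $\frac{n}{2}$ to assemble the binomials. Your write-up is in fact slightly more detailed than the paper's (you spell out the overlapping/disjoint arc geometry and the no-wraparound remark), but the argument is identical.
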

\begin{proof}
Without loss of generality, assume $u,v \in V_1$. For any vertex $z \in V$, let $\cE^{u,v}_z(x) = \{(u,z)\notin E, (v,z) \notin E\}$ be the event that $z$ is neither a neighbor of $u$ nor a neighbor of $v$.
For $z\in V_1$,
\begin{align*}
\Pr(\cE^{u,v}_z(x)) &= \Pr( (z,u) \notin E, (z,v) \notin E \mid (u,v) \in E, \dist(X_u,X_v)=x)\\
& = 1 - \Pr( (z,u) \in E \text{ or } (z,v) \in E \mid (u,v) \in E, \dist(X_u,X_v)=x)\\
& = 1 - (x+2r_s).
\end{align*}
For $z \in V_2$, we have, 
\begin{align*}
\Pr(\cE^{u,v}_z(x)) &= \Pr( (z,u) \notin E, (z,v) \notin E \mid (u,v) \in E,  \dist(X_u,X_v)=x)\\
& = 1 - \Pr( (z,u) \in E \text{ or } (z,v) \in E \mid (u,v) \in E, \dist(X_u,X_v)=x)\\
&= \begin{cases}1 - (2r_d  +x)& \text{ if } x\leq 2r_d\\ 1- 4r_d & \text{ otherwise} \end{cases} .
\end{align*}
Now since there are $\frac{n}2-2$ points in $V_1 \setminus \{u,v\}$ and $\frac{n}2$ points in $V_2$, we have the statement of the lemma. 
\end{proof}

\begin{lemma}
For any two vertices $u\in V_1,v \in V_2: (u,v) \in E$ belonging to different clusters and $\dist(X_u,X_v)=x$, the count of number of nodes forming Motif 4 with $u$ and $v$ (i.e. non-neighbor of $u$ and non-neighbor of $v$),  $ |\{z\in V: (z,u)\notin E, (z,v) \notin E\}|$ is a random variable distributed according to ${\rm Bin}(n-2,1-2r_s)$,   when $r_s>2r_d$.\label{lem:sep6}
\end{lemma}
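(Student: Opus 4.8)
The plan is to follow the same conditional-independence template used for Lemmas \ref{lem:sep}--\ref{lem:sep5}: condition on the positions $X_u, X_v$ with $\dist(X_u, X_v) = x$, compute for a single generic vertex $z \in V\setminus\{u,v\}$ the probability that $z$ is a non-neighbor of both $u$ and $v$, and then invoke the analogue of Lemma \ref{lem:indep} to conclude that these per-vertex events are independent across $z$ given the conditioning, so that the count is a sum of $n-2$ independent indicators. Because $(u,v)\in E$ and $u,v$ lie in different clusters, the conditioning forces $x \le r_d$; together with the hypothesis $r_s>2r_d$ this is the arithmetic that drives the whole computation.

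First I would realize $X_u$ and $X_v$ on the circle $[0,1]$ under the circular metric $\dist$, and split the analysis of $z$ into the two cases $z\in V_1$ (same cluster as $u$) and $z\in V_2$ (same cluster as $v$). When $z\in V_1$, it is a neighbor of $u$ exactly when $X_z$ lies in the arc of radius $r_s$ about $X_u$, and a neighbor of $v$ exactly when $X_z$ lies in the arc of radius $r_d$ about $X_v$; the target event is the complement of the union of these two arcs. When $z\in V_2$ the radii swap: radius $r_d$ about $X_u$ and radius $r_s$ about $X_v$. In either case the ``forbidden'' region is a union of a long arc (radius $r_s$) and a short arc (radius $r_d$) whose centers are at circular distance $x$.

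The crux is to show that in both cases this union has measure exactly $2r_s$, equivalently that the short arc is contained in the long arc. By the triangle inequality for $\dist$, any point within distance $r_d$ of the short arc's center lies within distance $x+r_d$ of the long arc's center; since $x\le r_d$ and $r_s>2r_d$ give $x+r_d\le 2r_d<r_s$, every such point lies inside the long arc. Hence the union equals the long arc and has measure $2r_s$, so the per-vertex non-neighbor probability is $1-2r_s$ in the $z\in V_1$ case and, by the identical estimate with the roles of $u$ and $v$ interchanged, also $1-2r_s$ in the $z\in V_2$ case. This containment (and its insensitivity to which cluster $z$ belongs to) is exactly where the hypothesis $r_s>2r_d$ is used, and I expect it to be the only genuine subtlety; the triangle-inequality argument handles wraparound on the circle automatically, so the remainder is bookkeeping.

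Finally, with the common per-vertex probability $1-2r_s$ established for all $n-2$ vertices of $V\setminus\{u,v\}$ and with the indicator events independent conditional on $\dist(X_u,X_v)=x$, the count is a sum of $n-2$ i.i.d.\ $\mathrm{Bernoulli}(1-2r_s)$ variables and is therefore distributed as ${\rm Bin}(n-2,1-2r_s)$, as claimed. Note that the answer is independent of $x$, which is the reason the distribution carries no $x$-dependence, matching the corresponding entry of Table \ref{tab:tab1}.
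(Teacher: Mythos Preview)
Your proposal is correct and follows the same approach as the paper: compute the per-vertex non-neighbor probability separately for $z\in V_1$ and $z\in V_2$, observe it equals $1-2r_s$ in both cases, and sum the independent indicators over the $n-2$ remaining vertices. You actually supply the geometric justification (the $r_d$-arc being contained in the $r_s$-arc via $x+r_d\le 2r_d<r_s$) that the paper's proof leaves implicit when it simply asserts $\Pr(\cE^{u,v}_z(x))=1-2r_s$.
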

\begin{proof}
For any vertex $z \in V_{1}$, let $\cE^{u,v}_z(x) = \{(u,z)\notin E, (v,z) \notin E \mid (u,v)\in E, \dist(X_u,X_v)=x\}$ be the event that $z$ is neither a  neighbor of $u$ and nor a neighbor of $v$.
For $z\in V_{1}\setminus\{u\}$
\begin{align*}
\Pr(\cE^{u,v}_z(x)) &= \Pr( (z,u) \notin E, (z,v) \notin E \mid (u,v) \in E, \dist(X_u,X_v)=x)\\
& = 1-2r_s.
\end{align*}

Similarly, for $z\in V_{2}\setminus\{v\}$
\begin{align*}
\Pr(\cE^{u,v}_z(x)) &= \Pr( (z,u) \notin E, (z,v) \notin E \mid (u,v) \in E, \dist(X_u,X_v)=x)\\
& = 1-2r_s.
\end{align*}

Now since there are $\frac{n}{2}-1$ points in $V_{1} \setminus \{u\}$ and $\frac{n}{2}-1$ points in $V_{2} \setminus \{v\}$, we have the statement of the lemma.
\end{proof}

It turns out that the simple Chernoff bound is not sufficient to prove any meaningful result for this motif. We recall the Bernstein's inequality in Lemma \ref{lem:bern} in order to prove Theorem \ref{th:main3} for the 4th motif.
\begin{lemma}[Bernstein's Inequality \cite{boucheron2004concentration}]
Let $X_1,\dots,Xn$ be iid real-valued random variables with mean zero, such that $|X_{i}| \le M \; \forall i$. Then with probability at least $1-\delta$, we have 
$$ \mid\sum_{i=1}^{n}X_{i}\mid  \le \sqrt{2n\mathbb{E}X_{1}^{2}\log \frac{2}{\delta}}+\frac{2M\log \frac{2}{\delta}}{3}.$$\label{lem:bern}
\end{lemma}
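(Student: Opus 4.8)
The plan is to prove the bound by the exponential moment (Chernoff) method: first establish a sub-gamma control of the moment generating function of $S_n = \sum_{i=1}^n X_i$, and then invert the resulting tail estimate into the stated high-probability deviation form. Write $\sigma^2 = \mathbb{E}X_1^2$. For any $t>0$, Markov's inequality applied to $e^{tS_n}$ gives $\Pr(S_n \ge s) \le e^{-ts}\,\mathbb{E}[e^{tS_n}] = e^{-ts}\prod_{i=1}^n \mathbb{E}[e^{tX_i}]$, where the product uses independence; by replacing $X_i$ with $-X_i$ the identical bound controls the lower tail $\Pr(-S_n \ge s)$.

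The core step is to bound a single factor $\mathbb{E}[e^{tX_i}]$. Expanding the exponential and using $\mathbb{E}X_i = 0$ gives $\mathbb{E}[e^{tX_i}] = 1 + \sum_{k\ge 2} \frac{t^k \mathbb{E}[X_i^k]}{k!}$. The boundedness $|X_i|\le M$ yields $\mathbb{E}[X_i^k] \le M^{k-2}\mathbb{E}[X_i^2] = \sigma^2 M^{k-2}$ for every $k\ge 2$, so the series is dominated by $1 + \frac{\sigma^2}{M^2}\sum_{k\ge2}\frac{(tM)^k}{k!} = 1 + \frac{\sigma^2}{M^2}(e^{tM}-1-tM)$. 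Applying $1+y\le e^y$ and multiplying over $i$ then gives $\mathbb{E}[e^{tS_n}] \le \exp\!\big(\tfrac{n\sigma^2}{M^2}(e^{tM}-1-tM)\big)$.

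It remains to optimize the exponent $-ts + \frac{n\sigma^2}{M^2}(e^{tM}-1-tM)$ over $t>0$. Using the standard estimate $e^{y}-1-y \le \frac{y^2/2}{1-y/3}$ for $0 \le y < 3$ and carrying out the optimization produces the Bennett--Bernstein tail $\Pr(S_n \ge s) \le \exp\!\big(-\frac{s^2}{2(n\sigma^2 + Ms/3)}\big)$, and adding the symmetric lower tail contributes a factor $2$. Finally I would invert this: setting the right side equal to $\delta$ and solving the quadratic $s^2 - \tfrac{2}{3}M\log(2/\delta)\,s - 2n\sigma^2\log(2/\delta) \ge 0$, then splitting the resulting root via $\sqrt{a+b}\le \sqrt a + \sqrt b$, yields the threshold $s = \sqrt{2n\sigma^2\log(2/\delta)} + \tfrac{2M}{3}\log(2/\delta)$, which is exactly the claimed bound.

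The main obstacle is not conceptual but a matter of tracking constants precisely. The two delicate points are: bounding the moment series tightly enough to land on the clean $M/3$ term in the denominator of the Bennett--Bernstein tail (this is where the explicit optimization in $t$, or equivalently the inequality on $e^y-1-y$, matters), and then performing the quadratic inversion so that the sub-Gaussian term $\sqrt{2n\sigma^2\log(2/\delta)}$ and the sub-exponential term $\tfrac{2M}{3}\log(2/\delta)$ appear \emph{additively} rather than jointly inside one square root. Subadditivity of the square root is precisely what guarantees that the stated additive form remains a valid (if slightly lossy) upper bound on the true deviation threshold.
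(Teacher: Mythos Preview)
The paper does not prove this lemma at all; it simply recalls Bernstein's inequality as a known result and cites \cite{boucheron2004concentration}. Your proposal, on the other hand, supplies a full argument via the standard exponential-moment route (Chernoff bound, Taylor expansion of the MGF with $|\mathbb{E}X_i^k|\le M^{k-2}\sigma^2$, the resulting Bennett--Bernstein tail, and a quadratic inversion using subadditivity of the square root). The steps you outline are correct and track the constants cleanly enough to recover the stated form, so there is no gap; the only ``difference'' from the paper is that the paper gives no proof to compare against.
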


\begin{theorem}[Motif 4]\label{th:main3}
If $r_s = \frac{a\log{n}}{n}$ and $r_d = \frac{b \log{n}}{n}$,  $a > 4b$, Algorithm \ref{alg:alg1} with 
$$E_S=1-\frac32r_s-2r_d-\sqrt{3r_s\frac{\log n}{n}}- \sqrt{4r_d\frac{\log n}{n}} - \frac{4\log n}{3n}$$ and $$E_D=1-2r_s+\sqrt{4r_s\frac{\log n}{n}}+\frac{2\log n}{3n},$$
where the \texttt{process}
subroutine counts Motif 4,
 can recover the clusters $V_1, V_2$ accurately with a probability of $1 -o(1)$ if 
$\mid a-4b \mid \ge 2(\sqrt{3a}+\sqrt{4b}+\sqrt{4a}+2)$.
\end{theorem}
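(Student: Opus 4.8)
The plan is to mirror the proof of Theorem \ref{th:main}, but to replace the multiplicative Chernoff bound by the variance-sensitive Bernstein inequality of Lemma \ref{lem:bern}. Since $a > 4b$ forces $r_s > 2r_d$, only the corresponding branches of Lemma \ref{lem:sep5} and Lemma \ref{lem:sep6} are relevant. First I would read off the conditional means of the Motif-4 count $Z$. From Lemma \ref{lem:sep6} the different-cluster mean is $\mu_d = (n-2)(1-2r_s)$, independent of $x$; from Lemma \ref{lem:sep5} the same-cluster mean is piecewise,
\begin{align*}
\mu_s = \begin{cases}(\tfrac{n}{2}-2)(1-x-2r_s) + \tfrac{n}{2}(1-x-2r_d), & x \le 2r_d,\\
(\tfrac{n}{2}-2)(1-x-2r_s) + \tfrac{n}{2}(1-4r_d), & 2r_d < x \le r_s.\end{cases}
\end{align*}
A one-line derivative check shows $\mu_s$ is continuous and strictly decreasing on $[0,r_s]$, so its minimum is attained at $x = r_s$, where $\mu_s = (\tfrac{n}{2}-2)(1-3r_s) + \tfrac{n}{2}(1-4r_d)$. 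Substituting $r_s = \tfrac{a\log n}{n}$, $r_d = \tfrac{b\log n}{n}$ confirms $\mu_s > \mu_d$ for every $x$, precisely because $a > 4b$, so the worst case for separation is $x = r_s$.

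The crucial point---and the reason the argument departs from Theorem \ref{th:main}---is that here the counts are of order $n$ while the gap $\mu_s - \mu_d$ is only of order $\log n$. A multiplicative Chernoff bound would require relative deviation $\delta \sim \tfrac{\log n}{n}$, making the exponent $\delta^2\avg(Z) \sim \tfrac{(\log n)^2}{n} \to 0$ and the bound vacuous. Instead I would center each indicator, $Y_z = \mathbb{1}\{\cE^{u,v}_z(x)\} - \Pr(\cE^{u,v}_z(x))$, so that $|Y_z| \le 1$ and $\expect Y_z^2 = p(1-p)$. Because each indicator has success probability $p = 1 - O(\tfrac{\log n}{n})$, its variance is only $1-p = O(\tfrac{\log n}{n})$; summed over $\Theta(n)$ vertices this gives total variance $\Theta(\log n)$, so Bernstein's deviation $\sqrt{2N\expect Y_z^2\log\tfrac2\delta} + \tfrac{2\log(2/\delta)}{3}$ is of order $\log n$---exactly the scale of the separation.

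Concretely, I would apply Lemma \ref{lem:bern} with $\delta = \Theta(\tfrac{1}{n\log n})$ so that $\log\tfrac2\delta = (1+o(1))\log n$. For the different-cluster count, a single binomial with $N = n-2$ and variance $(1-2r_s)(2r_s)$ yields an upper tail of $\sqrt{4r_s\tfrac{\log n}{n}}\,n + \tfrac{2\log n}{3}$, giving $E_D$. For the same-cluster count at $x = r_s$, I would apply Bernstein separately to the two binomials of Lemma \ref{lem:sep5}---one with variance $\approx 3r_s$ over $\tfrac n2$ terms, one with variance $\approx 4r_d$ over $\tfrac n2$ terms---and add the deviations; the two linear-in-$M$ terms combine into $\tfrac{4\log n}{3}$, producing $E_S$. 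Taking $E_S$ as the resulting high-probability lower bound on the same-cluster count per vertex and $E_D$ as the upper bound on the different-cluster count, the \texttt{process} subroutine classifies correctly whenever $E_S > E_D$. Multiplying $E_S - E_D$ by $\tfrac{n}{\log n}$ collects the leading coefficient $\tfrac a2 - 2b$ against $\sqrt{3a} + \sqrt{4b} + \sqrt{4a} + 2$, turning $E_S > E_D$ into exactly $|a-4b| \ge 2(\sqrt{3a}+\sqrt{4b}+\sqrt{4a}+2)$ once we use $a>4b$. A union bound over the at most $n-1$ pairs queried by Algorithm \ref{alg:alg1} keeps the overall failure probability $o(1)$, as in Theorem \ref{th:main}.

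The main obstacle is exactly the concentration step: one has to recognize that the relevant fluctuations of the Motif-4 count live at scale $\log n$ against a baseline of size $n$, so that only a variance-based inequality can detect the signal, and then track the constants $\sqrt{3a}$, $\sqrt{4b}$, $\sqrt{4a}$ and the additive $\tfrac23\log n$ terms carefully enough that they assemble into the clean threshold in the statement.
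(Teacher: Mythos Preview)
Your proposal is correct and follows essentially the same route as the paper's own proof: read off $\mu_s$ and $\mu_d$ from Lemmas \ref{lem:sep5} and \ref{lem:sep6}, observe the worst case for $\mu_s$ is at $x=r_s$, replace Chernoff by Bernstein because the success probabilities are $1-O(\tfrac{\log n}{n})$ so that the per-indicator variance $p(1-p)\le 1-p$ is $O(\tfrac{\log n}{n})$, apply Bernstein separately to each binomial piece with $\delta=\Theta(\tfrac{1}{n\log n})$, and then union bound over the at most $n-1$ queried edges. Your explanation of \emph{why} the multiplicative Chernoff bound is vacuous here (the gap is $\Theta(\log n)$ against a mean of $\Theta(n)$) is in fact more explicit than the paper, which simply asserts that Chernoff is insufficient and then invokes Lemma \ref{lem:bern}.
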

\begin{proof}
\correct{We need to consider the case of $r_s\geq 2r_d$.}
Let $Z$ denote the random variable that equals the number of common non-neighbors of two nodes $u,v \in V: (u,v) \in E$. Let us also denote $\mu_s= \avg(Z|u \sim v, \correct{\dist(X_u,X_v)=x)}$ and $\mu_d = \avg(Z|u \nsim v, \correct{\dist(X_u,X_v)=x)}$, where $u\sim v$ means $u$ and $v$ are in the same cluster. 
We can easily find $\mu_s$ and $\mu_d$ from Lemmas \ref{lem:sep5}, \ref{lem:sep6}. We see that, 
$\mu_s  =  \begin{cases}
(\frac{n}{2} -2)(1-x-2r_s) + \frac{n}{2}(1-x-2r_d), \text{ if $x\leq 2r_d$}\\
(\frac{n}{2} -2)(1-x-2r_s) + \frac{n}{2}(1-4r_d), \text{ if $2r_d<x\leq r_s$.}
 \end{cases}
 \text{and } \quad \mu_d = (n-2)(1-2r_s).$\\
%
%

The value of $\mu_s$ is more than that of $\mu_d$ for all values of $x$ when $r_s\geq 4r_d$. We try to bound the values of $Z$ in the two cases possible and then achieve the condition of correct resolution.
Now we will use Bernstein's inequality as defined in Lemma \ref{lem:bern}.

For a Bernoulli($p$) random variable $X$ we can define a corresponding zero mean random variable $\hat{X} \equiv X-\expect{X}$. It is easy to observe that $\expect{\hat{X}}^{2} =p(1-p) \le 1-p$ and $|\hat{X}| \le 1$. We use this simple translation for every random variable corresponding to each node forming such a motif with $u$ and $v$ and hence with a probability of at least $1-\frac{2}{n\log n}$, we must have
$$ Z \ge F^2_{\sim}(x) \equiv \begin{cases}
   (\frac{n}{2} -2)(1-x-2r_s) + \frac{n}{2}(1-x-2r_d)- \sqrt{n(x+2r_s)(\log2n+\log \log n)}& \\  \hspace{0.5in}- \sqrt{n(x+2r_d)(\log2n+\log \log n)}-\frac{4 (\log2n+\log \log n)}{3} , & \text{ if $x\leq 2r_d$}\\
 (\frac{n}{2} -2)(1-x-2r_s) + \frac{n}{2}(1-4r_d)- \sqrt{n(x+2r_s)(\log 2n+\log \log n)}& \\  - \sqrt{4nr_d(\log 2n+\log \log n)}-\frac{4( \log2n+\log \log n)}{3}, & \text{ if $2r_d<x\leq r_s$,}
 \end{cases}
$$ 
when $u$ and $v$ are in the same cluster.
Similarly, with probability at least  $1-\frac1{n\log n}$, $Z$ is at most $F^2_{\nsim} \equiv \mu_d + \sqrt{4(n-2)r_s(\log n+\log\log n)}+\frac{2( \log n+\log \log n)}{3}$ when $u,v$ belong to different clusters.
 
We calculate the minimum value of $F^2_\sim(x)$ over all values of $x$ to find the value closest to $F^2_\nsim$. It can be easily observed that $F^2_\sim(x)$ is a decreasing function with the minimum value of $(\frac{n}{2} -2)(1-3r_s) + \frac{n}{2}(1-4r_d)- \sqrt{3nr_s(\log 2n+\log \log n)} - \sqrt{4nr_d(\log 2n+\log \log n)}-\frac{4( \log2n+\log \log n)}{3} $ at $x=r_s$. Plugging in $r_s=\frac{a \log n}{n}$, $r_d=\frac{b \log n}{n}$  we get that the
 algorithm will be successful to resolve correctly  with probability $1- \frac{3}{n\log n}$ as long as,
$$
\left(\frac{n}{2} -2\right)(1-3r_s) + \frac{n}{2}(1-4r_d)- \sqrt{3nr_s(\log 2n+\log \log n)} - \sqrt{4nr_d(\log 2n+\log \log n) }-\frac{4( \log2n+\log \log n)}{3} $$
$$\ge (n-2)(1-2r_s)+ \sqrt{4(n-2)r_s(\log n+\log\log n)}+\frac{2( \log n+\log \log n)}{3}.$$
Plugging in $r_s=\frac{a \log n}{n}$, $r_d=\frac{b \log n}{n}$  and ignoring $o(\log n)$ factors, we get that 
$$a-4b \ge 2(\sqrt{3a}+\sqrt{4b}+\sqrt{4a}+2).$$
\end{proof}

We have done a comparison of the performances of Motif 1, Motifs 2 (or 3) and Motif 4, i.e., the guarantees of Theorems \ref{th:main}, \ref{th:main2} and \ref{th:main3} in Figure \ref{fig:compare}. From the figure it is evident that the Motif 1 outperforms other motifs in terms of current theoretical guarantees.

\begin{figure}
\centering
\includegraphics[scale =0.3]{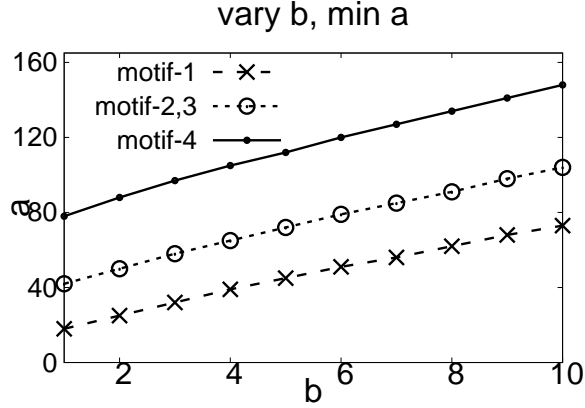}
\caption{A comparison of Theorems \ref{th:main}, \ref{th:main2} and \ref{th:main3}. We vary the value of $b$ and check the minimum value of $a$ that guarantees the condition of success in the respective theorems (motifs).}
\label{fig:compare}
\end{figure}

\end{document}